\theoremstyle{definition}
\newtheorem*{theorem}{Theorem}
\newtheorem{corollary}{Corollary}
\DeclareMathOperator{\sgn}{sgn}
\renewcommand{\Im}{\operatorname{Im}}
\renewcommand{\Re}{\operatorname{Re}}
\let\originalleft\left % fix \left and \right
\let\originalright\right
\renewcommand{\left}{\mathopen{}\mathclose\bgroup\originalleft}
\renewcommand{\right}{\aftergroup\egroup\originalright}
\newcommand{\del}{\bm{\nabla}}
\newcommand{\cross}{\times}
\newcommand{\nn}{\nonumber}
\newcommand{\beq}{\begin{equation}}
\newcommand{\eeq}{\end{equation}}
\begin{document}
\title{Exact solution to Maxwell's equations for the infinite ideal solenoid with a time-dependent surface current}
\author{Edward Parker}
\email{tparker@alumni.physics.ucsb.edu}
\date{\today}

\begin{abstract}
Very little previous literature has considered the \emph{exact} solution to Maxwell's equations for an infinite ideal cylindrical solenoid with an arbitrary time-dependent azimuthal surface current $K(t) \hat{\bm{\phi}}$. Most of the previous literature has focused on special cases and has approached the problem by calculating the magnetic vector potential $\bm{A}$, which requires performing some very complicated surface integrals over the cylinder. In this article, we take a simpler approach and directly tackle Maxwell's equations without ever invoking a vector potential. The high symmetry of the geometry allows us to reduce Maxwell's equations to just two coupled partial differential equations for two functions of two real variables, which can be readily solved numerically. We find the general analytic solution to these PDEs and derive the Green's functions for the electromagnetic fields, which allow us to calculate the fields directly from the surface current $K(t)$. We also briefly discuss a family of exact formal solutions that (the author believes) has not appeared in the previous literature because it corresponds to a current $K(t)$ that does not have a Fourier transform.
\end{abstract}

\maketitle

\section{Introduction}

Two of the workhorse examples in the study of classical electromagnetism (EM) are the parallel-plate capacitor and the cylindrical solenoid of azimuthal surface current. In the static context, these produce the simplest possible nontrivial electromagnetic fields far away from their edges: a uniform electric or magnetic field, respectively. Their highly symmetric geometries allow many quantities to be calculated exactly. In the dynamic context, the solenoid is arguably the ``nicer'' of these two examples, because we can consider a reasonably realistic idealization in which the surface current changes over time. By contrast, the continuity equation prevents us from moving charge between the plates of a parallel-plate capacitor without introducing a current that typically reduces the geometric symmetry.

Moreover, the solenoid with a slowly time-varying surface current is an excellent illustration of Faraday's law of induction that is discussed in most EM textbooks \cite{Griffiths, Jackson}. The high symmetry of the geometry allows us to calculate the EM fields in the quasistatic approximation. But textbook discussion do not always make completely clear that this solution to Maxwell's equations is only approximate \cite{Templin}.

The \emph{exact} solution to Maxwell's equations for a solenoid with an arbitrary time-dependent azimuthal surface current $K(t) \hat{\bm{\phi}}$ is quite complicated, but it can be solved exactly. The high symmetry of the geometry provides a nice example of how symmetries can greatly reduce the number of degrees of freedom in a problem. Surprisingly, there has been relatively little literature exploring this problem, and to the author's knowledge, no previous literature has directly solved Maxwell's equations in full generality without using the vector potential (which requires performing some very complicated surface integrals over the solenoid). This article does so.

The article proceeds as follows: In Section~\ref{Setup}, we set up the mathematical problem of the infinite ideal thin cylindrical solenoid with a time-dependent azimuthal surface current. In Section~\ref{Lit}, we review the prior literature for this problem. In Section~\ref{MaxwellEqs}, we use the symmetry of the problem to reduce Maxwell's equations (ordinary four partial differential equations for six unknown functions of four real variables) to just two coupled PDEs for two unknown functions of two real variables. The author was unable to find these simple PDEs in any of the published literature. In Section~\ref{Slow}, we consider the limiting case of slowly-varying (i.e. constant or quasistatic) currents $K(t)$. In Section~\ref{General}, we calculate the general solution for arbitrary currents. In Subsection~\ref{positive}, we discuss a class of solutions that (to the author's knowledge) has never appeared in the previous literature, because it corresponds to physically questionable (but mathematically well-defined) currents $K(t)$ that do not have a Fourier transform. In Subsection~\ref{negative}, we restrict ourselves to currents that do have a Fourier transform and work out the Green's functions for the electric and magnetic fields (\emph{not} the more common Green's function for the vector \emph{potential}) in both the frequency and the time domains, which (to the author's knowledge) have not been published in the literature before.

Most of the material in this article should be accessible to someone who has taken a graduate-level course in classical electromagnetism. Some of the material in Subsection~\ref{negative} uses some somewhat advanced concepts in harmonic analysis; in Appendix~\ref{Analytic} we give a self-contained and pedagogical explanation of the necessarily mathematical concepts. Appendix~\ref{Asymptotics} contains asymptotic forms of certain Bessel functions that are used in the main text. The main text can be read independently of the appendices if the reader is willing to skip certain proofs.

\section{Setup and assumptions} \label{Setup}

Clearly, cylindrical coordinates $(r, \phi, z)$ are the natural coordinate system to use for this cylindrical geometry.

In this article, we define an ``infinite ideal solenoid'' be be a current distribution with the following properties:
\begin{enumerate}
\item It is an infinitely long, infinitely thin cylindrical surface current with radius $R$ and surface current density $\bm{K}$.
\item The current flow is purely in the azimuthal (i.e. transverse) direction: $\bm{K} \propto \hat{\bm{\phi}}$.
\item the magnitude of the current at any given time is independent of the position on the cylinder: the magnitude $K$ is a function only of $t$ and not of $z$ or $\phi$.
\item There is no net electric (or magnetic!) charge anywhere: $\rho \equiv 0$.
\end{enumerate}

Together, these assumptions imply that $\bm{K}(\phi, z, t) \equiv K(t)\, \hat{\bm{\phi}}$. The volumetric current density $\bm{J}(r, \phi, z, t) = K(t)\,  \delta(r - R)\, \hat{\bm{\phi}}$, where $\delta$ represents the Dirac delta function.

Each of these assumptions can be translated into an idealized approximation for a physical solenoid, e.g. one made of a tightly coiled current-carrying wire. For example, the ``inifinitely thin'' part of assumption \#1 holds if the wire is very thin compared to $R$ and is coiled so tightly that we can neglect the spacing between coils. The ``infinitely long'' part holds if (a) the physical solenoid is much longer than its radius $R$, and (b) we restrict our attention to the region near the center of the cylinder, where we can neglect the fringing effects at the ends of the solenoid.\footnote{Note that once we start talking about the emitted electromagnetic fields, a new length scale will come into play: the wavelength of the electromagnetic radiation, or (closely related) the time scale over which the surface current $K(t)$ changes. The solutions that we derive below will only hold in the regime where this new length scale is far from the length scales that we are implicitly dropping in making these approximation. For example, our results below will stop being valid for wavelengths nearly as short as the diameter of the wires in the physical solenoid, or nearly as long as the length of the physical solenoid.} Assumption \#2 holds if the coils are angled nearly perpendicular to the axis of the cylinder (or if they double-wrap the cylinder with oppositely slanted layers), so that there is a negligible longitudinal current flow in the $z$-direction.

Of course, each of these simplifying assumptions can be -- and has been -- eliminated to get a more realistic model for a physical solenoid. Many, many articles consider the magnetic field for a finite-length solenoid and a steady (i.e. time-independent) current; Ref.~\onlinecite{Derby} reviews the literature and gives an exact analytic solution for thin solenoids, and Ref.~\onlinecite{Granum} compares several numerical models for both thin and thick solenoids. As Refs.~\onlinecite{Templin} and \onlinecite{Protheroe} discuss, assumptions \#2 and \#3 are very difficult to satisfy in practice: the natural way to create the current in a real solenoid is by driving one or both ends with a (potentially time-varying) voltage difference, but the signals from these driven ends can only propagate down the solenoid at the speed of light, so the instantaneous current will be different at different points along the solenoid. Ref.~\onlinecite{Protheroe} addresses this issue by considering a much more complicated model that captures the finite propagation speed of signals down the length of the solenoid and the dependence of $K$ on the $z$-coordinate. In this article, we will neglect this practical consideration for simplicity, noting that assumptions \#2 and \#3 above are perfectly consistent with special relativity, even though they are practically difficult to satisfy experimentally.

\section{Prior literature} \label{Lit}

The previous literature has extensively considered the infinite ideal solenoid with a time-varying current (and many variations). The first exact solution that the author has found was published in Ref.~\onlinecite{Abbott}, which considers a sinusoidal time-varying current $K(t) \propto \sin(\omega t)$. The authors perform some truly heroic feats of double integration over the cylinder to calculate the exact magnetic vector potential $\bm{A}$ (including all retardation effects) outside the solenoid (only), and then differentiate $\bm{A}$ to obtain exact analytic expressions for the outside electromagnetic fields in terms of Bessel functions of the first and second kind.\footnote{Ref.~\onlinecite{Abbott} also considers azimuthal current densities with time dependence $K(t)$ proportional to the Heaviside step function $K(t) \propto \Theta(t)$, $K(t) \propto t\, \Theta(t)$, and $K(t) \propto \delta(t)$, but only in the far-field limit $r \gg R$. It also considers the case of longitudinal rather than azimuthal current $\bm{K} \parallel \hat{z}$.} (Ref.~\onlinecite{Abbott} also proves a theorem that is not particularly relevant to this article, but which is so remarkable that we state it in Appendix~\ref{Vanishing} without proof.)

Ref.~\onlinecite{Templin} finds the same results by directly solving Maxwell's equations for a sinusoidal current (only). Refs.~\onlinecite{Templin, Calvert, TemplinResponse} contain an interesting back-and-forth conceptual discussion about whether the electric field \emph{outside} the solenoid is best thought of as being induced by the changing magnetic field that is located \emph{inside} or \emph{outside} of the solenoid.

We can use the Fourier transform to decompose an arbitrary current $K(t)$ into a superposition of its Fourier modes $\tilde{K}(\omega)$. Since we know the exact electromagnetic fields produced by a sinusoidal current with frequency $\omega$, we can superpose the contributions from each Fourier mode $\tilde{K}(\omega)$ to find formal expressions for the fields generated by an arbitrary current $K(t)$; these expressions for the fields are rather complicated integral transforms of $\tilde{K}(\omega)$. Ref.~\onlinecite{Bologna} expresses the fields both inside and outside the solenoid in terms of Fourier transforms.

Remarkably, the author was unable to find any published literature that takes the simplest approach of directly solving Maxwell's equations for a general current $K(t)$. (Ref.~\onlinecite{Templin} comes close, but only considers the sinusoidal case.) In this paper, we do so. Unlike in most of the literature discussed above, we do not use the vector potential $\bm{A}$; nor do we ever need to perform any geometric integrals over the solenoid surface. One of our main results simplifies the problem down to a simple pair of coupled PDEs for two functions of two variables (equations~\eqref{eqs}) that can be readily solved numerically. Another main result solves those PDEs analytically and directly expresses each electromagnetic field as a single convolution of the current $K(t)$ with an appropriate Green's function $g(r, t)$ (equations~\eqref{convs} and \eqref{grts}). Both results are exact and do not make any approximations. The intermediate steps used to derive them use various tools from harmonic analysis, but the final results do not require any Fourier transforms or even complex numbers -- only a single convolution integral for each electromagnetic field. 

\section{Maxwell's equations for the ideal solenoid} \label{MaxwellEqs}

Both inside and outside the solenoid, the electromagnetic fields satisfy the vacuum Maxwell's equations
\begin{align}
\del \cdot \bm{E} &= 0, \label{Maxwell} \\
\del \cdot \bm{B} &= 0, \nn \\
\del \cross \bm{E} &= -\frac{\partial B}{\partial t}, \nn \\
\del \cross \bm{B} &= \frac{\partial E}{\partial t}, \nn
\end{align}
where we work in Heaviside-Lorentz units where $\epsilon_0 = \mu_0 = c = 1$. At the solenoid, we have the interface boundary conditions that $\bm{E}$ is continuous,
\beq
\bm{E}_\text{in}(r = R) = \bm{E}_\text{out}(r = R), \label{Eboundary}
\eeq
and
\begin{align}
B_\text{in}^r(r = R) &= B_\text{out}^r(r = R), \label{Bboundary} \\
\bm{B}_\text{in}^\parallel(r = R) - \bm{B}_\text{out}^\parallel(r = R) &= -\bm{K} \cross \hat{\bm{r}} = K \hat{\bm{z}}, \nn
\end{align}
where $\bm{B}^\parallel$ refers to the component of $B$ that lies in the $\phi$-$z$ plane tangent to the solenoid.

The boundary conditions at spatial infinity (or more precisely, far from the solenoid as $r \to \infty$) are somewhat more complicated. If $K(t)$ vanishes identically before some starting time $t_0$, then the $\bm{E}$ and $\bm{B}$ fields vanish identically outside of the future light cone $|r - R| = t - t_0$. But if the support of $K(t)$ extends infinitely far back in the past, then we need to instead impose the condition that the electromagnetic waves propagate purely outward rather than inward in order to respect causality. There are several different ways to formalize this physical requirement mathematically, but the most common one is the \emph{Sommerfeld radiation condition} \cite{Schot}
\beq \label{Sommerfeld}
\lim_{r \to \infty} r \left( \frac{\partial}{\partial r} - i k \right) \tilde{\bm{A}}(\bm{x}, \omega) = 0,
\eeq
where $\tilde{\bm{A}}(\bm{x}, \omega)$ is the Fourier transform of the vector potential $\bm{A}(\bm{x}, t)$ with respect to time. 

A fully general electromagnetic field on flat Minkowski spacetime is represented by a two-form field $F: \mathbb{R}^4 \to \mathbb{R}^6$, and Maxwell's equations represent eight independent equations.\footnote{In $D$ spacetime dimensions, $F: \mathbb{R}^D \to \mathbb{R}^{\binom{D}{2}} = \mathbb{R}^{\frac{1}{2} D (D-1)}$, and Maxwell's equations represent  $D(D^2-3D+8)/6$ independent equations.} But in this case, we can use the symmetry of the source distribution to enormously simplify Maxwell's equations:

\begin{enumerate}
\item The time-varying current density $K(t)$ explicitly breaks time-translation and time-reversal symmetry, so the relevant symmetries are the cylindrical symmetries of the solenoid. The coordinates $\phi$ and $z$ are cyclic, so the electromagnetic fields must be independent of $\phi$ and $z$ when expressed in cylindrical coordinates, which reduces the domain from four dimensions to just two: $r$ and $t$.
\item Since the source current $\bm{K}$ is purely azimuthal and $\rho = 0$, the source four-vector $J$ is symmetric under the reflection $z \to -z$ about the $x$-$y$ plane. The $\bm{E}$ field is a true vector, so each coordinate component changes sign under an inversion of that coordinate, and so $E_z$ is odd under the reflection $z \to -z$: $E_z(z) = -E_z(-z)$ for all $z$. But using the translational invariance in $z$, $E_z(-z) = E_z(z)$, so $E_z(z) = -E_z(z)$, and so $E_z \equiv 0$ by reflectional and translational symmetry in $z$.
\item Similarly, the $\bm{B}$ field is a pseudovector: under the reflection $z \to -z$, the component $B_z$ is unchanged but the other components $\bm{B}_\parallel$ parallel to the $x$-$y$ plane change sign, and so $\bm{B}_\parallel(z) = -\bm{B}_\parallel(-z)$. But then by translational invariance in $z$, we have $\bm{B}_\parallel(-z) = \bm{B}_\parallel(z)$ and so $\bm{B}_\parallel(z) = -\bm{B}_\parallel(z)$, so $\bm{B}_\parallel(z) \equiv \bm{0}$ and $\bm{B} \parallel \hat{\bm{z}}$.
\end{enumerate}

Solely from cyclindrical and reflection symmetry, we have reduced the initial six components of the electromagnetic field down to just three ($E_r$, $E_\phi$, and $B_z$) and simplified the electromagnetic field to a much simpler map $\mathbb{R}^2 \to \mathbb{R}^3$. Since $\hat{\bm{r}}$ and $\hat{\bm{\phi}}$ are singular at the $z$-axis, the requirement that $\bm{E}$ be continuinuous there requires that $E_r$ and $E_\phi$ both approach $0$ as $r \to 0^+$. We will now drop the subscript $z$ and note that for the rest of this article, $B$ will denote the signed quantity $B_z$, not the nonnegative magnitude $|\bm{B}|$.

Using all these symmetries, Maxwell's equations~\eqref{Maxwell} reduce to

\begin{align} 
\frac{1}{r} \frac{\partial}{\partial r} (r E_r) &= 0 \label{Gauss} \\
\frac{1}{r} \frac{\partial}{\partial r} (r E_\phi) &= -\frac{\partial B}{\partial t} \nn \\
-\frac{\partial B}{\partial r} &= \frac{\partial E_\phi}{\partial t}, \nn
\end{align}
with Gauss's law for magnetism satisfied automatically.

Gauss's law for electricity~\eqref{Gauss} is easy to solve: it implies that $r E_r$ is constant both inside and outside the solenoid, and so $E_r \propto 1/r$. The requirement of continuity at the $z$-axis requires that $E_r$ vanish inside the solenoid. If there were a net surface charge on the solenoid, then the proportionality constant would be different inside and outside the solenoid -- but since we are assuming no net charge on the solenoid, the boundary condition~\eqref{Eboundary} implies that the proportionality constant is the same in both regions. So $E_r \equiv 0$ everywhere, and the $\bm{E}$ field must be purely azimuthal: $\bm{E} \parallel \hat{\bm{\phi}}$. Therefore, we have reduced the electromagnetic field to an even simpler map $\mathbb{R}^2 \to \mathbb{R}^2$. As before, we will drop the subscript on $E_\phi$ and let $E$ denote the signed quantity $E_\phi$, not the magnitude $|\bm{E}|$.

We are finally left with the quite simple coupled partial differential equations
\begin{subequations} \label{eqs}
\begin{empheq}[box=\fbox]{align}
\frac{1}{r} \frac{\partial}{\partial r} (r E) = \frac{E}{r} + \frac{\partial E}{\partial r} &= -\frac{\partial B}{\partial t} \label{Faraday} \\
-\frac{\partial B}{\partial r} &= \frac{\partial E}{\partial t}. \label{Ampere}
\end{empheq}
\end{subequations}
Equation~\eqref{Faraday} is Faraday's law, and equation~\eqref{Ampere} is Amp\`ere's law. The real-valued fields $E(r, t)$ and $B(r, t)$ are defined on the half-plane $(r \geq 0, t \in \mathbb{R})$. The interface boundary conditions~\eqref{Eboundary} and~\eqref{Bboundary} and the requirement of continuity reduce to the requirements that
\begin{enumerate}
\item $E$ be continuous on the whole half-plane $r \geq 0$,
\item $E \to 0$ as $r \to 0^+$,
\item $B$ have a jump discontinuity
\[
B(r \to R^-, t) - B(r \to R^+, t) = K(t)
\]
along the coordinate line $r = R$, and
\item $E$ and $B$ respect the boundary conditions discussed below equations~\eqref{Bboundary} as $r \to \infty$.
\end{enumerate}

The equations~\eqref{eqs} are the first main result of this article. To the author's knowledge, they have not been explicitly published in any previous textbooks or scientific journals (although they can be derived from the results in Ref.~\onlinecite{Templin}).  This system of equations, which is exactly equivalent to Maxwell's equations for the ideal solenoid, is probably the best formulation of the problem to use for purely numerical methods. As a pair of coupled linear homogeneous first-order PDEs in only two variables, they are relatively straightforward to solve numerically. Note that nothing up to this point has required any complications such as gauge potential fields, complex numbers, or even any integrals.

\section{Slowly-varying currents} \label{Slow}

Let us check the familiar textbook situations in which the currents are static or quasistatic.

\subsection{Constant current} \label{constant}

Suppose that $K(t) \equiv K$ is constant. In this case, we can solve equations~\eqref{eqs} by inspection. Since the boundary conditions are time-independent, it is natural to guess that $E$ and $B$ are constant inside and outside the solenoid, so that Amp\`ere's law~\eqref{Ampere} is trivially satisfied. Faraday's law~\eqref{Faraday} implies that $E \propto 1/r$, and then boundary condition \#2 requires that $E \equiv 0$. Boundary condition \#4 requires that $B_\text{out} \equiv 0$ outside the solenoid, and then boundary condition \#3 requires that $B_\text{in} \equiv K$ inside. This is the familiar exact magnetostatic solution for an ideal solenoid with constant current.

\subsection{Quasistatic current}

When first introducing Faraday's law, EM courses often discuss the example of a quasistatic solenoid -- although they do not always make fully clear that this is an approximate rather than an exact solution to Maxwell's equations \cite{Templin}. Textbooks often explain that this quasistatic solution holds when the current is ``slowly varying'', although they do not always explain exactly what this means (i.e., slow compared to what?).

In this approximation, the magnetic field is assumed to be spatially uniform inside the solenoid with a magnitude that instantaneously matches the solenoid current ($B_\text{in}(r,t) \equiv K(t)$) and to vanish outside ($B_\text{out} \equiv 0$). The electric field is assumed to be $E_\text{in} = -\frac{1}{2} \frac{dK}{dt} r$ inside the solenoid and $E_\text{out} = -\frac{1}{2} \frac{dK}{dt} \frac{R^2}{r}$ outside, in accordance with the integral form of Faraday's law. Faraday's law~\eqref{Faraday} is exactly satisfied, as are the interface boundary conditions \#1-3,\footnote{There are some conceptual difficulties around the meaning of the Sommerfeld radiation condition in this case, since Maxwell's equations are not satisfied.} but Amp\`ere's law~\eqref{Ampere} is not: the LHS $-\frac{\partial B}{\partial r}$ vanishes, but the RHS
\[
\frac{\partial E}{\partial t} = \frac{1}{2} \frac{d^2K}{dt^2} r \neq 0\]
inside the solenoid and
\[
\frac{\partial E}{\partial t} = \frac{1}{2} \frac{d^2K}{dt^2} \frac{R^2}{r} \neq 0\]
outside. In this case the quasistatic approximation is ``non-perturbative'', since we are not dropping a subleading correction term but instead the \emph{only} nonzero term. Noting that $\frac{d^2K}{dt^2}$ is the key parameter that set the scale in Amp\`ere's law~\eqref{Ampere}, we must resort to dimensional analysis and conclude that the quasistatic approximation holds if the dimensionless ratio
\[
\frac{R^2}{K(t)} \frac{d^2K}{dt^2} \ll 1
\]
at all times.

As noted in Refs.~\onlinecite{Abbott, Calvert}, if $K(t)$ depends linearly on time, then $\frac{d^2K}{dt^2} \equiv 0$, the quasistatic approximation becomes exact, and the magnetic field vanishes identically outside of the solenoid. It may seem somewhat counterintuitive that in this case, Maxwell's equations are \emph{exactly} solved by using the Biot-Savart law with the time-varying current source evaluated at the \emph{present} time rather than the retarded time, but nevertheless it is true \cite{GriffithsHeald}. The situation is somewhat reminiscent of the fact that the electric field created by a relativistic electric point charge moving with constant velocity points exactly at the charge's \emph{present} position, rather than at its retarded position as we might expect. In any case, a surface current that depents linearly on $t$ for all time is somewhat artificial, since it becomes unboundedly large in the far past and far future.

\section{The general solution} \label{General}

The partial differential equations~\eqref{eqs} are in the most useful form for numerical solutions, because coupled first-order differential equations are usually more convenient to tackle numerically than uncoupled second-order differential equations. Nevertheless, we can decouple the equations~\eqref{eqs} by taking the partial derivative of \eqref{Faraday} with respect to $r$, taking the partial derivative of \eqref{Ampere} with respect to $t$, and equating them to get the decoupled PDE for $E$
\beq \label{E}
\frac{\partial}{\partial r} \left( \frac{1}{r} \frac{\partial}{\partial r} (r E) \right) = \frac{\partial^2 E}{\partial t^2}.
\eeq
Similarly, we can take the partial derivative of \eqref{Faraday} with respect to $t$, apply the differential operator
\[
\frac{1}{r} \frac{\partial}{\partial r} (r \underline{\phantom{x}}) = \frac{1}{r} + \frac{\partial}{\partial r}
\]
to \eqref{Ampere}, and equate them to get the decoupled PDE for $B$
\beq \label{B}
\frac{1}{r} \frac{\partial}{\partial r} \left( r \frac{\partial B}{\partial r} \right) = \frac{\partial^2 B}{\partial t^2}.
\eeq
Equations~\eqref{E} and \eqref{B} are somewhat similar -- but not identical -- to the usual 1D wave equation.

Before we dive into the messy details of solving these PDEs, we will outline our plan of attack. The typical textbook approach for dealing with an uncoupled linear PDE is to first make a separation-of-variables ansatz that breaks the PDE into an equality between decoupled ODEs of different dependent variables, which must equal a common unknown constant. The boundary conditions typically restrict the values that the constant can take on, and the full solution is a linear combination over the allowed constants, with coefficients determined by the boundary conditions.

Plugging in the ansatzes $E(r,t) = \mathcal{R}_E(r) T_E(t),\ B(r,t) = \mathcal{R}_B(r) T_B(t)$ and separating variables gives the eigenvalue equations
\begin{align}
\frac{1}{\mathcal{R}_E} \frac{d}{dr} \left( \frac{1}{r} \frac{d}{dr} (r \mathcal{R}_E) \right) &= \frac{1}{T_E} \frac{d^2 T_E}{dt^2} = \lambda, \label{separated} \\
\frac{1}{r \mathcal{R}_B} \frac{d}{dr} \left( r \frac{d \mathcal{R}_B}{dr} \right) &= \frac{1}{T_B} \frac{d^2 T_B}{dt^2} = \lambda. \nn
\end{align}
$\mathcal{R}_E(r)$ must be continuous everywhere by boundary condition \#1, but it can have a kink (i.e. fail to be differentiable) at $r = R$.\footnote{The letter $\lambda$ denotes an eigenvalue, \emph{not} a wavelength. The eigenfunctions $\mathcal{R}$ and $T$ implicitly depend on the eigenvalue $\lambda$, but we will omit this dependence in order to simplify the notation. A priori, the eigenvalue $\lambda$ could take on different values $\lambda_E$ and $\lambda_B$ across the two equations~\eqref{separated}. But as we will see below, the coupled equations \eqref{eqs} require that $\lambda_{E,\text{in}} = \lambda_{B,\text{in}}$ and $\lambda_{E,\text{out}} = \lambda_{B,\text{out}}$ in order for the time dependence to have the same frequency. And the requirement that $E$ be continuous at the solenoid $r = R$ for all $t$ requires that $\lambda_{E,\text{in}} = \lambda_{E,\text{out}}$, so $\lambda_E = \lambda_B \equiv \lambda$ everywhere.}

In general, the eigenvalues $\lambda$ can range over a continuous set of real values. Therefore, we will need to solve equations~\eqref{separated} for each possible eigenvalue $\lambda$ -- which is already quite nontrivial -- and then superpose them not through a discrete linear combination, but through an integral over the allowed eigenvalues $\lambda$.

\subsection{Case 1: $\lambda = 0$.}
This case is the simplest. It turns out to yield exactly the quasistatic solution
\begin{align}
&K(t) = K_0 + \dot{K} t, \label{quasistatic} \\
&B_\text{in}(r, t) \equiv K(t), &&B_\text{out}(r, t) \equiv 0, \nn \\
&E_\text{in}(r, t) = -\frac{1}{2} \dot{K} r, &&E_\text{out}(r, t) = -\frac{1}{2} \dot{K} \frac{R^2}{r} \nn
\end{align}
discussed in the previous section, where $\dot{K}$ is a \emph{constant}. This is the contribution to the electromagnetic fields from any \emph{linear} component of $K(t)$ -- although as we mentioned above, it is not clear how physically realistic is a surface current that gets unboundedly large in the far past and far future.

\subsection{Case 2: $\lambda > 0$.} \label{positive}
This case is even more physically questionable. It leads to mathematically valid solutions to Maxwell's equations (whose radial functions $\mathcal{R}_E(r)$ and $\mathcal{R}_B(r)$ are modified Bessel functions) that technically satisfy all four boundary conditions listed above. However, these solutions correspond to surface currents $K(t)$ that grow exponentially with time in the far past and/or far future. As such, the function $K(t)$ is not a tempered distribution and does not have a well-defined Fourier transform -- not even in the distributional sense -- and so much of our analytical machinery breaks down. The author is not aware of any previous literature that mentions the existence of these formal solutions to Maxwell's equations for the ideal solenoid, since most previous literature has implicitly assumed that $K(t)$ is a tempered distribution with a well-defined Fourier transform. We will not work out these physically unrealistic solutions to Maxwell's equations in detail, but it is interesting to note their existence.

Moreover, it is interesting to consider the example of a surface current $K(t)$ that starts from $0$ in the infinite past and then grows exponentially (with some growth constant $\tau$) up until some time $t_\text{change}$, when it then changes to a different time dependence that stays bounded over future times. This current profile is reasonably physically realistic. Of course, we could not experimentally arrange for it told hold exactly out to the infinite past -- but it is negligibly small in the far past, so we could closely approximate it by turning on a very small current and then letting the current grow exponentially for several $e$-folding times $\tau$ before we taper off its growth.

In this example, the spacetime region $|r - R| > t - t_\text{change}$ lies outside the future light cone of the ``new'' current. Therefore, the past light cones of every point in this spacetime region only intersect with the exponentially growing current. Therefore, this region has no way to ``know'' that the current will eventually change from exponential growth, and so the electromagnetic fields in this region are the same as those produced by a hypothetical (and physically unrealistic) current that grows exponentially for \emph{all} time. In this spacetime region (only), the radial functions $\mathcal{R}_E$ and $\mathcal{R}_B$ are given \emph{exactly} by \emph{modified} Bessel functions that take the dimensionless argument $r/\tau$.

\subsection{Case 3: $\lambda < 0$.} \label{negative}
This is the most physically relevant case. It will be convenient to change variables to $\lambda = -\omega^2$, where $\omega \in \mathbb{R} \setminus \{0\}$. We see that $T(t)$ is sinusoidal with frequency $|\omega|$. We can introduce the nondimensionalized variable\footnote{Since we are working in vacuum and in units where $c=1$, the wave number $k$ equals the frequency $\omega$. Note that this substitution fails if $\omega = 0$, so taking the limit $\omega \to 0$ of the expressions derived in this subsection does \emph{not} yield the quasistatic solution~\eqref{quasistatic}, but only the static solution discussed in subsection~\ref{constant}.} $x := \omega r$ and rearrange the radial equations to
\begin{align*}
\frac{d^2 \mathcal{R}_E}{dx^2} + \frac{1}{x} \frac{d \mathcal{R}_E}{dx} + \left( 1 - \frac{1}{x^2} \right) \mathcal{R}_E &= 0, \label{RE} \\
\frac{d^2 \mathcal{R}_B}{dx^2} + \frac{1}{x} \frac{d \mathcal{R}_B}{dx} + \mathcal{R}_B &= 0
\end{align*}
which are Bessel's equations of order 1 and 0 respectively.

There is no single Bessel function that satisfies both the boundary conditions \#2 and \#4, so $\mathcal{R}_E(x)$ must have a kink (or higher-derivative discontinuity) at $x = \omega R$, and $\mathcal{R}_B(x)$ must have a jump discontinuity there by boundary condition \#3. Boundary condition \#2 requires that $\lim \limits_{x \to 0^+} \mathcal{R}_E(x) = 0$, and continuity requires that $\lim \limits_{x \to 0^+} \mathcal{R}_B(x)$ must converge to a finite value. These requirement are satisfied by the Bessel functions of the first kind $\mathcal{R}_E(x) \propto J_1(x)$ and $\mathcal{R}_B(x) \propto J_0(x)$ inside the solenoid.

Boundary condition \#4 requires that the waves outside the solenoid be purely outgoing. If $\omega > 0$, then this means that they must have the form
\begin{align*}
E_\text{out} &\propto \Re \left[ e^{i(-\omega t + \delta_E)} H^{(1)}_1(\omega r) \right], \\
B_\text{out} &\propto \Re \left[ e^{i (-\omega t + \delta_B)} H^{(1)}_0(\omega r) \right],
\end{align*}
where $H^{(1)}_\alpha(x)$ are the Hankel functions of the first kind and the implicit proportionality constants can depend on $\omega$ (but not on $r$ or $t$).\footnote{These are the standard Bessel functions, not the modified Bessel functions mentioned above for the unphysical $\lambda > 0$ case.} If $\omega < 0$, then the outward-propagating waves instead correspond to the Hankel functions of the second kind $H^{(2)}_\alpha(x)$.

We can now play the usual game of assigning coefficients to each of the four functions $E_\text{in}$, $E_\text{out}$, $B_\text{in}$, and $B_\text{out}$ and then using equations~\eqref{eqs} and the boundary conditions to fix their values. The process is straightforward but requires some algebra and some obscure Bessel function identities.

Ref.~\onlinecite{Templin} worked it out for the special case of a sinusoidal surface current $K_\omega(t)$ oscillating at a single positive frequency $\omega$. If we characterize the sinusoidal current $K_\omega(t)$ by a \emph{complex} constant $\tilde{K}_\omega$ via\footnote{Ref.~\onlinecite{Templin} only considered the case of positive $K$ and $K(t) \propto \cos(\omega t)$, but it easy to generalize the author's results to the case where $K$ is complex and $K(t)$ has a phase offset.}
\beq \label{Komega}
K_\omega(t) = \Re \big[ \tilde{K}_\omega\, e^{-i \omega t} \big] = |\tilde{K}_\omega| \cos \big( \omega t - \arg{\tilde{K}_\omega} \big),
\eeq
then the solution works out to be
\begin{widetext}
\begin{align}
E_\text{in}^\omega(r, t) &= -\frac{1}{2} \pi \omega R\, J_1(\omega r) \Re \left[ \phantom{i} \tilde{K}_\omega H^{(1)}_1(\omega R)\, e^{-i \omega t} \right], &r &\in [0, R]  \label{omegasols} \\
E_\text{out}^\omega(r, t) &= -\frac{1}{2} \pi \omega R J_1(\omega R) \Re \left[ \phantom{i} \tilde{K}_\omega H^{(1)}_1(\omega r\, )\, e^{-i \omega t} \right], &r &\geq R \nn \\
B_\text{in}^\omega(r, t) &= \phantom{-}\frac{1}{2} \pi \omega R J_0(\omega r\, ) \Re \left[ i \tilde{K}_\omega H^{(1)}_1(\omega R)\, e^{-i \omega t} \right], &r &\in [0, R] \nn \\
B_\text{out}^\omega(r, t) &= \phantom{-}\frac{1}{2} \pi \omega R J_1(\omega R) \Re \left[ i \tilde{K}_\omega H^{(1)}_0(\omega r\, )\, e^{-i \omega t} \right], &r &\geq R. \nn
\end{align}
\end{widetext}
For negative $\omega$, all the Hankel functions of the first kind $H^{(1)}_\alpha$ in eqs.~\eqref{omegasols} are replaced by Hankel functions of the second kind $H^{(2)}_\alpha$.

For a general (tempered-distribution) surface current $K(t)$, we can superpose the fields~\eqref{omegasols} sourced by a single Fourier mode~\eqref{Komega} over all the Fourier modes with amplitude (and phase) specified by $\tilde{K}(\omega)$ at frequency $\omega$. But in this case, doing so turns out to be somewhat subtle; Ref.~\onlinecite{Bologna} is the only source the author could find that actually discusses how to do so.\footnote{Griffiths' textbook \cite{Griffiths} only says ``Any wave can be written as a linear combination of sinusoidal waves, and therefore you know how sinusoidal waves behave, you know in principle how \emph{any} wave behaves.'' Similarly, Jackson's textbook \cite{Jackson} only says ``Assum[e] solutions with harmonic time dependence $e^{-i \omega t}$, from which we can build an arbitrary solution by Fourier superposition.''}

The difficulty is that, as mentioned above, for positive $\omega$ the outward-propagating waves correspond to the Hankel functions of the first kind $H^{(1)}_\alpha(x)$, but for negative $\omega$ they correspond to the Hankel functions of the second kind $H^{(2)}_\alpha(x)$. There are two possible ways to deal with this difficulty.

The first approach is to decompose $K(t)$ via the usual Fourier transform
\[
K(t) = \frac{1}{2\pi} \int \limits_{-\infty}^\infty \tilde{K}(\omega) e^{-i \omega t}\, d\omega,
\]
where $\omega$ ranges over all of $\mathbb{R}$. Under this approach, we can find the general fields $E_\text{in}$, $E_\text{out}$, $B_\text{in}$, and $B_\text{out}$ by promoting the constant $\tilde{K}_\omega$ in \eqref{omegasols} to the function $\tilde{K}(\omega)$ and then integrating over all $\omega \in \mathbb{R}$. This is the approach implicitly taken by Ref.~\onlinecite{Bologna}. The advantage of this first option is that we can use all the familiar machinery of Fourier analysis; the disadvantage is that the integrand is complicated and is discontinuous at $\omega = 0$, because we need to define it piecewise using expressions~\eqref{omegasols} for $\omega > 0$ and the corresponding expressions with $H^{(2)}_\alpha$ for $\omega < 0$.

The second approach is to instead decompose $K(t)$ using the modified inverse Fourier transform
\beq \label{FourierK}
K(t) = \Re \left[ \frac{1}{\pi} \int \limits_0^\infty \tilde{K}(\omega) e^{-i \omega t}\, d\omega \right],
\eeq
where now we only integrate over positive $\omega$. It turns out that since $K(t)$ is real, this modified Fourier decomposition holds if we use the Fourier coefficients $\tilde{K}(\omega)$ given by the usual forward Fourier transform
\[
\tilde{K}(\omega) = \int \limits_{-\infty}^\infty K(t)\, e^{i \omega t}\, dt.
\]
Loosely speaking, we need to use a prefactor $\frac{1}{\pi}$ in eq.~\eqref{FourierK} instead of the usual $\frac{1}{2\pi}$ because there is an implicit factor of $2$ that corrects for the fact that we are only integrating over positive frequencies. See Appendix~\ref{Analytic} for a more rigorous proof of eq.~\eqref{FourierK} and a conceptual explanation of why it works. The advantage of this second option is that the integrands are simpler and continuous; the disadvantage is that we need to use slightly more complicated Fourier machinery to generalize results like the convolution theorem.

Since it is already much more convenient to express the integrands \eqref{omegasols} as the real parts of complex functions, we will choose the second option. We can find the general fields $E_\text{in}$, $E_\text{out}$, $B_\text{in}$, and $B_\text{out}$ by promoting the constant $\tilde{K}_\omega$ in \eqref{omegasols} to the function $\tilde{K}(\omega)$ and then integrating over only positive $\omega$:
\begin{align} \label{IFTsols}
E_\text{in}(r,t) &= \frac{1}{\pi} \int_0^\infty E_\text{in}(r,t;\omega)\, d\omega \\
E_\text{out}(r,t) &= \frac{1}{\pi} \int_0^\infty E_\text{out}(r,t;\omega)\, d\omega \nn \\
B_\text{in}(r,t) &= \frac{1}{\pi} \int_0^\infty B_\text{in}(r,t;\omega)\, d\omega \nn \\
B_\text{out}(r,t) &= \frac{1}{\pi} \int_0^\infty B_\text{in}(r,t;\omega)\, d\omega. \nn
\end{align}

If we plug the expressions~\eqref{omegasols} into equations~\eqref{IFTsols}, then we see that
\beq \label{Einrt}
E_\text{in}(r,t) = \Re \left[ \frac{1}{\pi} \int_0^\infty \tilde{g}_{E_\text{in}}(r, \omega)\, \tilde{K}(\omega)\, e^{-i \omega t}\, d\omega \right]
\eeq
and similarly for the other three fields, where
\begin{align}
\tilde{g}_{E_\text{in}}(r, \omega) &:= \frac{-1}{2} \pi\, \omega R\, J_1(\omega r\, )\, H^{(1)}_1(\omega R), \label{gtilde} \\
\tilde{g}_{E_\text{out}}(r, \omega) &:= \frac{-1}{2} \pi\, \omega R\, J_1(\omega R)\, H^{(1)}_1(\omega r), \nn \\
\tilde{g}_{B_\text{in}}(r, \omega) &:= \frac{+i}{2} \hspace{1pt} \pi\, \omega R\, J_0(\omega r\, )\, H^{(1)}_1(\omega R), \nn \\
\tilde{g}_{B_\text{out}}(r, \omega) &:= \frac{+i}{2} \hspace{1pt} \pi\, \omega R\, J_1(\omega R)\, H^{(1)}_0(\omega r). \nn
\end{align}

Ref.~\onlinecite{Bologna} derives a modified version of equations~\eqref{Einrt} and \eqref{gtilde}, but does not consider the time-domain duals of the frequency-domain Green's functions~\eqref{gtilde}. We will derive the time-domain Green's functions in the rest of this section, which will allow us to directly solve the problem in the time domain without every needing to go into the frequency domain (or needing to use any complex numbers).

Equation~\eqref{Einrt} is similar to (the real part of) an inverse Fourier transform, but not identical because we only integrate over positive frequencies. If the integral in eq.~\eqref{Einrt} extended over all $\omega \in \mathbb{R}$ (and the prefactor were $\frac{1}{2\pi}$), then we would be able to apply the convolution theorem, and we would find that $E_\text{in}(r, t) = \Re[g_{E_\text{in}}(r, t) \ast K(t)] = \Re[g_{E_\text{in}}(t)] \ast K(t)$ because $K(t)$ is real, where $g_{E_\text{in}}(r, t)$ would be the inverse Fourier transform of $\tilde{g}_{E_\text{in}}(r, \omega)$. So $\Re[g_{E_\text{in}}(r, t)]$ would be the time-domain Green's function for the electric field inside the solenoid. 

The fact that the integral in equation~\eqref{Einrt} only runs over positive frequencies means that we cannot apply the standard convolution theorem. Nevertheless, as we prove in Corollary~\ref{1sidedconv} of Appendix~\ref{Analytic}, the argument goes through with only minimal changes, and
\begin{empheq}[box=\fbox]{align}
E_\text{in}(r, t) &= g_{E_\text{in}}(r, t) \ast K(t) \label{convs} \\
E_\text{out}(r, t) &= g_{E_\text{out}}(r, t) \ast K(t) \nn \\
B_\text{in}(r, t) &= g_{B_\text{in}}(r, t) \ast K(t) \nn \\
B_\text{out}(r, t) &= g_{B_\text{out}}(r, t) \ast K(t), \nn
\end{empheq}
where all convolutions are taken with respect to time. Equations~\eqref{convs} are the second main result of this article.  Equation~\eqref{reonesidedconv} gives that since $K(t)$ is real, each of the four Green's functions has the form
\beq \label{modifiedIFT}
g(r,t) = \mathrm{Re} \left[ \frac{1}{\pi} \int_0^\infty \tilde{g}(r, \omega)\, e^{-i \omega t}\, d\omega \right]
\eeq
where $\tilde{g}(r, \omega)$ denotes the corresponding frequency-domain expression~\eqref{gtilde}. Explicitly, we get the time-domain Green's functions\footnote{Note that $\tilde{g}_{E_\text{in}}(r = 0, \omega) \equiv g_{E_\text{in}}(r = 0, t) \equiv 0$, so $E_\text{in}(r = 0, t) \equiv 0$ in accordance with boundary condition \#2.}
\begin{empheq}[box=\fbox]{align}
g_{E_\text{in}}(r, t) &= \mathrm{Re} \left[ \frac{-1}{2} \int_0^\infty \omega R\, J_1(\omega r\, )\, H^{(1)}_1(\omega R)\, e^{-i \omega t}\, d\omega \right] \label{grts} \\
g_{E_\text{out}}(r, t) &= \mathrm{Re} \left[ \frac{-1}{2} \int_0^\infty \omega R\, J_1(\omega R)\, H^{(1)}_1(\omega r)\, e^{-i \omega t}\, d\omega \right] \nn \\
g_{B_\text{in}}(r, t) &= \mathrm{Re} \left[ \frac{+i}{2} \int_0^\infty \omega R\, J_0(\omega r\, )\, H^{(1)}_1(\omega R)\, e^{-i \omega t}\, d\omega \right] \nn \\
g_{B_\text{out}}(r, t) &= \mathrm{Re} \left[ \frac{+i}{2} \int_0^\infty \omega R\, J_1(\omega R)\, H^{(1)}_0(\omega r)\, e^{-i \omega t}\, d\omega \right]. \nn
\end{empheq}

Note that in the special case where $K(t) \equiv K$ is constant, equations~\eqref{convs} and \eqref{modifiedIFT} give that
\begin{align*}
E(r, t) &\equiv K \int_{-\infty}^\infty g(r, \tau)\, d\tau \\
&= K\, \mathrm{Re} \left[ \frac{1}{\pi} \int_{-\infty}^\infty \int_0^\infty \tilde{g}(r, \omega)\, e^{-i \omega \tau}\, d\omega\, d\tau \right] \\
&= K\, \mathrm{Re} \left[ \frac{1}{\pi} \int_0^\infty \tilde{g}(r, \omega) \left( \int_{-\infty}^\infty e^{-i \omega \tau}\, d\tau \right) d\omega \right] \\
&= K\, \mathrm{Re} \left[ 2 \int_0^\infty \tilde{g}(r, \omega)\, \delta(\omega)\, d\omega \right] \\
&= K\, \mathrm{Re} \left[ \tilde{g}(r, \omega = 0) \right],
\end{align*}
where we interpret $\tilde{g}(r, \omega = 0)$ to mean $\lim \limits_{\omega \to 0^+} \tilde{g}(r, \omega)$ since the integral only runs over nonnegative values of $\omega$.
We prove in equations~\eqref{lowfreqgs} of Appendix~\ref{Asymptotics} that
\begin{align*}
\lim_{\omega \to 0^+} \tilde{g}_{E_\text{in}}(r, \omega) = \lim_{\omega \to 0^+} \tilde{g}_{E_\text{out}}(r, \omega) = \lim_{\omega \to 0^+} \tilde{g}_{B_\text{out}}(r, \omega) &= 0, \\
\lim_{\omega \to 0^+} \tilde{g}_{B_\text{in}}(r, \omega) &= 1.
\end{align*}
We therefore get the familiar magnetostatic solution $E(r, t) \equiv 0$, $B(r, t) = K\, \theta(R - r)$.

Since all four frequency-domain Green's functions~\eqref{gtilde} remain bounded in the low-frequency limit $\omega \to 0^+$, all four integrals in equations~\eqref{grts} converge at the lower ends of the intervals of integration. But as we prove in Appendix~\ref{Asymptotics}, in the high-frequency limit all four frequency-domain Green's functions~\eqref{gtilde} asymptotically approach
\beq \label{ghighfreq}
\tilde{g}(r, \omega) \sim \tilde{g}_\text{HF}(r, \omega) := \frac{1}{2} \sqrt{\frac{R}{r}} \left[ \pm e^{i\, |r-R|\, \omega} + e^{i \left( (R+r) \omega - \frac{1}{2} \pi \right)} \right]
\eeq
as $\omega \to +\infty$, where we choose the $+$ branch of the $\pm$ symbol for $\tilde{g}_{B_\text{in}}$ and the $-$ branch for $\tilde{g}_{E_\text{in}}$, $\tilde{g}_{E_\text{out}}$, and $\tilde{g}_{B_\text{out}}$.\footnote{Just as the magnetic field $B_\text{in}$ behaves differently from the other three fields in the static case $\omega = 0$ -- namely, by being nonzero -- it also behaves differently in the high-frequency limit $\omega r \gg 1$. Along the $r = 0$ axis, $\tilde{g}_{E_\text{in}}(r = 0, \omega) \equiv 0$ and $\tilde{g}_{B_\text{in}}(r = 0, \omega) \sim \sqrt{\frac{\pi}{2} \omega R}\, e^{i \left( \omega R - \frac{1}{4} \pi \right)}$ as $\omega \to +\infty$. We neglect this measure-zero case for the rest of this article; we can handle it using a similar approach as in the $r > 0$ case, but doing so complicates the results by adding more special cases that need to be handled separately.} Since the frequency-domain Green's functions do not go to zero in the high-frequency limit $\omega \to +\infty$, they are not square-integrable, and so the integrals~\eqref{grts} do not converge to proper functions $g(r, t)$ but to distributions that contain Dirac delta functions.

To separate out the ``well-behaved'' contributions to the Green's functions~\eqref{grts}, we decompose each of the four frequency-domain Green's functions~\eqref{gtilde} into the sum of the asymptotic exponential $\tilde{g}_\text{HF}(r, \omega)$ given in equation~\eqref{ghighfreq} and a remainder term 
\beq \label{gtildeSI}
\tilde{g}_\text{SI}(r, \omega) := \tilde{g}(r, \omega) - \tilde{g}_\text{HF}(r, \omega)
\eeq
that will turn out to be square-integrable:
\beq \label{freqdecomp}
\tilde{g}(r, \omega) = \tilde{g}_\text{SI}(r, \omega) + \tilde{g}_\text{HF}(r, \omega).
\eeq
It turns out that the $\tilde{g}_{B,\text{HF}}(r, \omega)$ terms~\eqref{ghighfreq} for the $B$ field contain the entire jump discontinuity of $\tilde{g}_B(r, \omega)$ at $r = R$, so the remaining square-integrable term $\tilde{g}_{B,\text{SI}}(r, \omega)$ is continuous at $r = R$. We can distribute the modified inverse Fourier transform~\eqref{modifiedIFT} across the two terms of the decomposition~\eqref{freqdecomp} and define
\begin{align}
g_\text{SI}(r, t) &:= \mathrm{Re} \left[ \frac{1}{\pi} \int_0^\infty \tilde{g}_\text{SI}(r, \omega)\, e^{-i \omega t}\, d\omega \right], \label{gSI} \\
g_\text{HF}(r, t) &:= \mathrm{Re} \left[ \frac{1}{\pi} \int_0^\infty \tilde{g}_\text{HF}(r, \omega)\, e^{-i \omega t}\, d\omega \right], \label{gHF}
\end{align}
with $g(r, t) = g_\text{SI}(r, t) + g_\text{HF}(r, t)$.\footnote{The subscript ``HF'' (for ``high-frequency'') in $g_\text{HF}(r,t)$ should not be interpreted too literally. It is simply a mnemonic for the inverse Fourier transform of the particular choice of high-frequency limiting expression $\tilde{g}_\text{HF}(r, \omega)$ of $\tilde{g}(r, \omega)$ that is given in equation~\eqref{ghighfreq} -- but $g_\text{HF}(r, t)$ includes the low-frequency modes in $\tilde{g}_\text{HF}(r, \omega)$. The particular choice of function~\eqref{ghighfreq} is not the unique asymptotic function for $\tilde{g}(r, \omega)$, but it is the most convenient choice for our purposes. Therefore, the particular choice of decomposition~\eqref{freqdecomp} is not a unique decomposition of $\tilde{g}(r, \omega)$ into the sum of a ``square-integrable part'' and a ``high-frequency part''.}

We can calculate the second modified inverse Fourier transform~\eqref{gHF} to express $g_\text{HF}(r, t)$ in closed form. We prove in equation~\eqref{OnesidedExpIFTapp} of Appendix~\ref{Analytic} that if $A \in \mathbb{C}$ and $t_0 \in \mathbb{R}$, then 
\beq \label{OnesidedExpIFTbody}
\Re \left[ \frac{1}{\pi} \int_0^\infty A\, e^{i \omega t_0}\, e^{-i \omega t} \, d\omega \right] = \Re[A]\, \delta(t - t_0) + \frac{\Im[A]}{\pi (t - t_0)}.
\eeq
Both terms of $\tilde{g}_\text{HF}(r, \omega)$ in equation~\eqref{ghighfreq} take the form $A\, e^{i \omega t_0}$ with $A = \pm \frac{1}{2} \sqrt{\frac{R}{r}},\ t_0 = |r-R|$ and $A = -\frac{1}{2} \sqrt{\frac{R}{r}}\, i, t_0 = R + r$, respectively. Therefore, equations~\eqref{ghighfreq}, \eqref{gHF}, and \eqref{OnesidedExpIFTbody} give that
\beq \label{gHFrt}
g_\text{HF}(r, t) = \frac{1}{2} \sqrt{\frac{R}{r}} \left[ \pm \delta(t - |r - R|) - \frac{1}{\pi (t - (R + r))} \right].
\eeq
It turns out that $g_E(r, t)$ is continuous across the solenoid $r = R$, and $g_B(r, t)$ is continuous except for the sign of the delta-function term, which switches across the solenoid. 

Returning to equations~\eqref{convs}, we see that the first term in equation~\eqref{gHFrt} (the delta function) corresponds to the contribution to the electromagnetic fields from the wavefront that is just arriving at the point $(r, t)$ after having been emitted at the solenoid at time $t - |r - R|$ and traveled radially inward or outward at the speed of light. This delta-function term represents the very first influence that a given point ``feels'' as the initial wavefront arrives. The second term is inversely proportional to $t-(R + r)$, which is the retarded time at which the \emph{back} end of the solenoid (antipodal to the observation point) emitted the EM waves that reach the observation point at time $t$. 

Causality requires that the time-domain Green's functions $g(r, t)$ all vanish outside the future light cone $t = |r - R|$.\footnote{For this geometry, the radius $r$ represents a cylinder rather than a point, so the future light cone $t = |r - R|$ is a three-dimensional null hypersurface. It starts out aligned with the solenoid (as a cylinder of radius $R$) and then splits into two concentric cylinders. One cylinder shrinks in radius at the speed of light before vanishing at time $t = R$ when it degenerates at the solenoid axis, and the other increases in radius without bound at the speed of light. The ``outside'' of this light cone corresponds to the region $|R - r| > t$ that is spacelike separated from the entire solenoid. For $t < R$, this region consists of two disconnected components, one bounded in radius; for $t > R$, it consists of a single connected component unbounded in radius.} Therefore, outside the light cone we must have that $g_\text{SI}(r, t) \equiv -g_\text{HF}(r, t)$ for all four Green's functions. From equations~\eqref{gSI} and \eqref{gHFrt}, we see that causality implies the purely mathematical fact that
\beq \label{causality}
\mathrm{Re} \left[ \int_0^\infty \tilde{g}_\text{SI}(r, \omega)\, e^{-i \omega t}\, d\omega \right] = \frac{\sqrt{R/r}}{2 (t - (R + r))}
\eeq
if $0 < t < |R - r|$ for all four functions~\eqref{gtilde}. This mathematical result is easy to check numerically, but the author is not aware of any way to prove it directly without using this causality argument.

The modified inverse Fourier transform~\eqref{gSI} for $g_\text{SI}(r, t)$ cannot (to the author's knowledge) be computed in closed form for any of the four electromagnetic fields described in equations~\eqref{grts}. But since the frequency-domain functions $\tilde{g}_\text{SI}(r, \omega)$ defined in equation~\eqref{gtildeSI} are all square-integrable, we can calculate the Fourier transforms numerically (although the fact that the integrands are oscillating means that we need to be careful with numerical stability).

In Figure~\ref{plots}, we plot the time-domain Green's functions $g(r, t) = g_\text{SI}(r,t) + g_\text{HF}(r, t)$ for the electric and magnetic fields inside and outside the solenoid. As expected, the Green's functions vanish identically outside of the solenoid's future light cone. This fact serves as a check of the correctness of our calculations.\footnote{We did not manually truncate the Green's functions in Figure~\ref{plots}. We calculated the functions numerically for all times within the plots' domains; outside of the solenoid's future light cone, the two terms cancel identically because of equation~\eqref{causality}.}

\begin{figure*}
\includegraphics[width=0.325\textwidth]{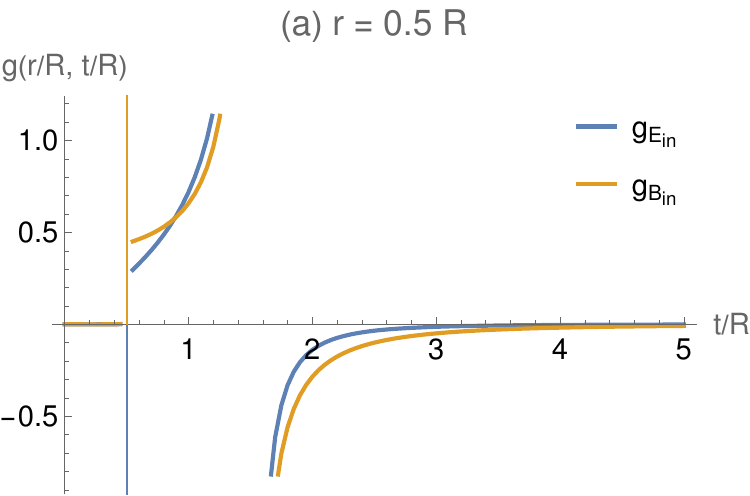}
\includegraphics[width=0.325\textwidth]{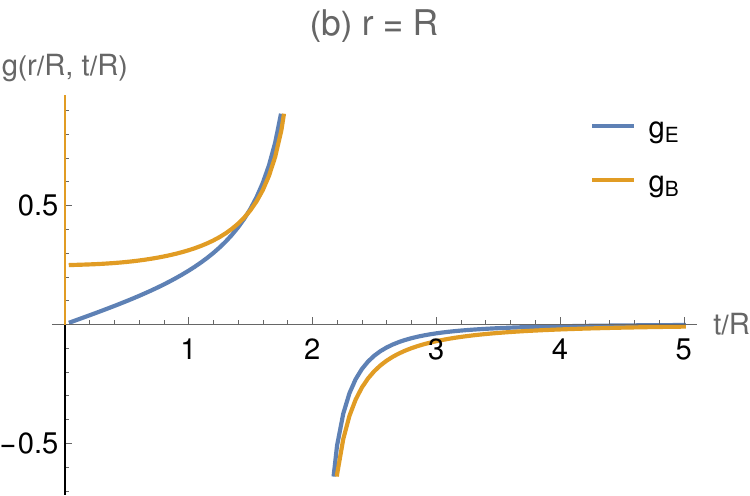}
\includegraphics[width=0.325\textwidth]{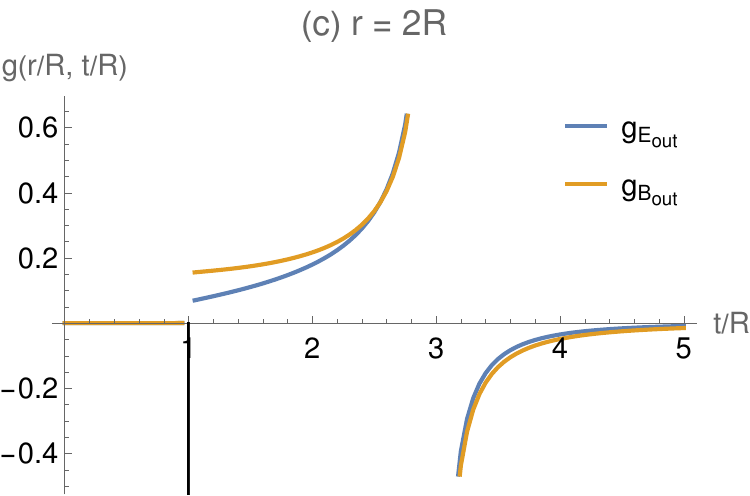}
\caption{\label{plots} Plots of the time-domain Green's functions $g_E(r, t)$ for the electric field (in blue) and $g_B(r, t)$ for the magnetic field (in orange) (a) inside the solenoid, at $r = 0.5 R$, (b) at the solenoid, at $r = R$, and (c) outside the solenoid, at $r = 2R$. The Green's functions vanish identically if $t < |R - r|$ (outside the solenoid's future light cone). Each Green's function has a $-1/(t - (R + r))$ singularity at $t = r + R$. The vertical rays at $t = |r - R|$ represent Dirac delta functions. Both Green's functions are non-differentiable at the solenoid $r = R$, but $g_E(r, t)$ is continuous there, while $g_B(r, t)$ is continuous except for the delta-function term, whose coefficient jumps discontinuously from being positive inside the solenoid to being negative outside of it. Upward-pointing rays indicate a delta function with a positive coefficient, downward-pointing rays indicate one with a negative coefficient, and black rays indicate a blue and orange ray overlaid at the same point in time. Panel (b) displays orange rays pointing both upward and downward to indicate the sign change across the solenoid. We calculated the Green's functions $g(r, t) = g_\text{SI}(r, t) + g_\text{HF}(r, t)$ from equations~\eqref{gtilde}, \eqref{ghighfreq}, \eqref{gtildeSI}, \eqref{gSI}, and \eqref{gHFrt}. We discretized the independent variable $t$ into time steps $\Delta t = 0.05 R$ and evaluated the improper integral~\eqref{gSI} numerically. The integral appears to converge, but its integrand oscillates in sign, so for the purpose of numerical stability we truncated the upper bound of integration to $\omega_\text{max} R = 2000$.}
\end{figure*}

\appendix

\section{Vanishing electromagnetic fields outside of an ideal solenoid} \label{Vanishing}

As mentioned in the main text, Ref.~\onlinecite{Abbott} proves a truly remarkable theorem about infinite ideal solenoids. This theorem is not entirely simple to state, but it is worth the effort.

\begin{theorem}
Suppose that $\kappa(t)$ is \emph{any} continuous real-valued function defined over a time interval of finite duration $T$. Then there exists an extension $K(t)$ of $\kappa(t)$ to the entire real line such that the following propositions hold:
\begin{itemize}
\item $K(t)$ is defined over all of $\mathbb{R}$.
\item $K(t) \equiv \kappa(t)$ on the original domain of $\kappa(t)$.
\item If $K(t)$ is the current density for an infinite ideal solenoid of diameter $cT$ (where $c$ is the speed of light), then the \emph{exact} electromagnetic fields vanish \emph{identically} outside the solenoid at all times.\qed
\end{itemize}
\end{theorem}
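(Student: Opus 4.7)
The plan is to extract the structure of $K$ directly from the Fourier-domain picture developed in Subsection~\ref{negative}. Both outside frequency-domain Green's functions $\tilde{g}_{E_\text{out}}(r, \omega)$ and $\tilde{g}_{B_\text{out}}(r, \omega)$ in equations~\eqref{gtilde} carry an overall factor of $J_1(\omega R)$, so the convolution identities~\eqref{convs} imply that $E_\text{out}(r, t) \equiv B_\text{out}(r, t) \equiv 0$ for all $r > R$ and all $t$ if and only if $J_1(\omega R)\,\tilde{K}(\omega) \equiv 0$ as a tempered distribution on $\mathbb{R}$. Since $J_1$ has only isolated simple zeros at $0$ and at $\pm j_{1,n}$ ($n \geq 1$), this forces $\tilde{K}$ to be supported on the discrete set $\{0\} \cup \{\pm \omega_n\}_{n \geq 1}$, where $\omega_n := j_{1,n}/R$. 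Inverting, $K(t)$ must be a real non-harmonic Fourier series
\beq \label{AbbottKseries}
K(t) = a_0 + \sum_{n=1}^\infty \bigl[ a_n \cos(\omega_n t) + b_n \sin(\omega_n t) \bigr].
\eeq

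The theorem is therefore reduced to the purely analytic claim that every continuous $\kappa : [0, T] \to \mathbb{R}$ with $T = 2R$ (in our $c = 1$ units) admits an expansion of the form~\eqref{AbbottKseries} agreeing with $\kappa$ on $[0, T]$. The McMahon asymptotic $j_{1,n} = (n + \tfrac{1}{4})\pi + O(1/n)$ shows that the $\omega_n$ are spaced asymptotically by $\pi/R = 2\pi/T$ -- the same density as the standard Fourier basis on $[0, T]$ -- with $|\omega_n - n\pi/R| < \pi/(4R)$ for every $n$. This places the exponential system $\{1\} \cup \{e^{\pm i \omega_n t}\}_{n \geq 1}$ right at the critical density for completeness in $L^2(0, T)$, the borderline of Kadec's $\tfrac{1}{4}$-theorem but still covered by Levinson's completeness theorem. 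Hence a real $L^2(0, T)$-convergent series of the form~\eqref{AbbottKseries} equal to $\kappa$ almost everywhere certainly exists.

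The hard step -- and the main obstacle -- is upgrading this $L^2$ statement into an \emph{exact pointwise} extension of the continuous function $\kappa$ while keeping $K$ regular enough (at worst a tempered distribution) that the convolutions~\eqref{convs} are legitimate and actually reproduce zero outside fields. My plan would be to first smoothly mollify $\kappa$ into a Schwartz function with compactly supported Fourier transform, then use a Paley-Wiener / reproducing-kernel interpolation formula adapted to the Bessel-zero frequencies to construct bounded coefficients $\{a_n, b_n\}$ whose partial sums interpolate the mollified function at a dense set of sample times in $[0, T]$. A uniform estimate on the non-harmonic Fourier partial sums, combined with the uniform continuity of $\kappa$ on a compact interval, should then let one pass to the limit and recover pointwise identity with $\kappa$ on all of $[0, T]$ while keeping $\{a_n, b_n\}$ bounded -- hence $K \in L^\infty(\mathbb{R})$, so certainly tempered. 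Once such a $K$ is in hand, the argument of the first paragraph guarantees that the exterior fields vanish identically, as claimed.
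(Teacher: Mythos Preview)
The paper does not prove this theorem. Section~\ref{Lit} says the result is ``so remarkable that we state it in Appendix~\ref{Vanishing} without proof,'' and indeed the theorem environment in Appendix~\ref{Vanishing} closes immediately after the statement, deferring the proof entirely to Ref.~\onlinecite{Abbott}. There is therefore no in-paper argument to compare your proposal against.

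On its own merits, your strategy is the right one and is essentially that of the original reference: both exterior Green's functions in equations~\eqref{gtilde} carry an overall factor $J_1(\omega R)$ (and the remaining factors $\omega R\, H^{(1)}_\alpha(\omega r)$ are nonvanishing for $\omega, r > 0$), so the outside fields vanish identically iff $\tilde K$ is supported on the zero set of $J_1(\cdot\,R)$, which forces $K$ to be a non-harmonic Fourier series in the frequencies $\omega_n = j_{1,n}/R$. The completeness of $\{1\}\cup\{e^{\pm i\omega_n t}\}$ on an interval of length $T = 2R$ is then exactly the right question. One sharpening: the McMahon correction term is $-3/\bigl(8\pi(n+\tfrac14)\bigr) < 0$, so $j_{1,n}/\pi - n < \tfrac14$ \emph{strictly} for every $n\geq 1$, and Kadec's theorem applies directly to give a Riesz basis, not merely Levinson completeness.

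The gap you flag --- upgrading $L^2$ equality on $[0,T]$ to pointwise identity while keeping $K$ tempered on $\mathbb R$ --- is genuine, and your mollification-plus-interpolation sketch is a plan, not a proof. A cleaner route than the one you outline: the Riesz-basis expansion has $\ell^2$ coefficients, so the series~\eqref{AbbottKseries} defines a tempered distribution on $\mathbb R$ whose restriction to $[0,T]$ equals $\kappa$ in $L^2$ and hence a.e.; since the convolutions~\eqref{convs} are insensitive to modification of $K$ on a null set, simply redefine $K$ on the exceptional null set to match $\kappa$ exactly. This also explains the paper's parenthetical remark that $\kappa$ may have finitely many jump discontinuities.
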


(In fact, the original function $\kappa(t)$ does not even have to be continuous; it can have a finite number of jump discontinuities.) The extended function $K(t)$ is \emph{not} periodic in general.

\section{Modified inverse Fourier transforms evaluated over positive frequencies} \label{Analytic}

Let
\[
\mathcal{F}[f(t)] := \int_{-\infty}^\infty f(t)\, e^{i \omega t}\, dt
\]
and
\[
\mathcal{F}^{-1} \big[ \tilde{f}(\omega) \big] := \frac{1}{2\pi} \int_{-\infty}^\infty \tilde{f}(\omega)\, e^{-i \omega t}\, d\omega
\]
denote the forward and inverse Fourier transforms and let $\theta(x)$ denote the Heaviside step function. Let the \emph{Hilbert transform} of a function $f(t)$ be another function $\hat{f}(t)$ or $H[f](t)$ defined by
\beq \label{Hilbert}
\hat{f}(t) = H[f](t) := \frac{1}{\pi}\, \mathrm{p.v.} \int_{-\infty}^\infty \frac{f(\tau)}{t-\tau} d\tau,
\eeq
where $\mathrm{p.v.} \int$ denotes the Cauchy principal value of an improper integral \cite{King}.\footnote{Strictly speaking, the Hilbert transform is only defined for functions in $L^p$ with $p \in (1, \infty)$ by the Titchmarsh theorem \cite{King}. But in this appendix, we will ignore questions of functional domains and assume that all functions are sufficiently well-behaved. It will sometimes be clearer for us to write $H[f(t)]$ instead of $H[f](t)$ or $\hat{f}(t)$; these expressions all refer to the same Hilbert transform.} Note that unlike with the Fourier transform, the arguments of $f(t)$ and $\hat{f}(t)$ lie in the same domain.

All the results in this Appendix follow from the follow theorem:

\begin{theorem}
If $\tilde{f}(\omega) = \mathcal{F}[f(t)]$ and $f(t) = \mathcal{F}^{-1}\big[ \tilde{f}(\omega) \big]$ are a Fourier pair of functions $\mathbb{R} \to \mathbb{C}$, then
\beq \label{AnalyticTheorem}
\mathcal{F}^{-1} \big[ 2 \theta(\omega) \tilde{f}(\omega) \big] = f(t) - i \hat{f}(t),
\eeq
where $\hat{f}(t)$ denotes the Hilbert transform $H[f](t)$.\footnote{Under the standard sign convention for a \emph{spatial} Fourier transform, where the \emph{forward} Fourier transform contains $e^{-i k x}$ and the \emph{inverse} transform contains $e^{i k x}$, the RHS becomes $f(t) + i \hat{f}(t)$.}
\end{theorem}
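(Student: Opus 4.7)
The plan is to exploit the elementary decomposition $2\theta(\omega) = 1 + \sgn(\omega)$, which reduces the statement to the well-known Fourier-domain characterization of the Hilbert transform. By linearity of $\mathcal{F}^{-1}$,
\[
\mathcal{F}^{-1}\bigl[2\theta(\omega)\tilde{f}(\omega)\bigr] = f(t) + \mathcal{F}^{-1}\bigl[\sgn(\omega)\tilde{f}(\omega)\bigr],
\]
so it suffices to show that $\mathcal{F}^{-1}\bigl[\sgn(\omega)\tilde{f}(\omega)\bigr] = -i\hat{f}(t)$.

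Next I would recognize the Hilbert transform~\eqref{Hilbert} as the convolution $\hat{f}(t) = \frac{1}{\pi t} * f(t)$, where $\frac{1}{\pi t}$ is interpreted as a tempered distribution via the principal-value prescription already built into the definition. Assuming $f$ is well-behaved enough for the convolution theorem to apply (e.g., Schwartz class, consistent with the domain assumption in the footnote after~\eqref{Hilbert}), we get $\mathcal{F}[\hat{f}](\omega) = \tilde{f}(\omega)\, \mathcal{F}\bigl[\frac{1}{\pi t}\bigr](\omega)$, so the whole claim reduces to the single distributional identity $\mathcal{F}\bigl[\frac{1}{\pi t}\bigr](\omega) = i\sgn(\omega)$.

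The main technical step is deriving this identity, since $1/t$ is not integrable near the origin and the Fourier integral must be interpreted distributionally. The cleanest route is via the Sokhotski--Plemelj formula $(t - i\epsilon)^{-1} = \mathrm{p.v.}\,t^{-1} + i\pi\,\delta(t)$: the contour integral $\int_{-\infty}^{\infty} (t - i\epsilon)^{-1} e^{i\omega t}\,dt$ is evaluated by closing in the upper half-plane, which gives $2\pi i$ for $\omega > 0$ and $0$ for $\omega < 0$ by the residue theorem and Jordan's lemma. Subtracting the $i\pi\,\delta(t)$ contribution then yields $\mathrm{p.v.}\int e^{i\omega t}/t\,dt = i\pi\,\sgn(\omega)$, and dividing by $\pi$ gives the claimed Fourier transform.

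Putting the pieces together, $\mathcal{F}[\hat{f}](\omega) = i\sgn(\omega)\tilde{f}(\omega)$, so $\sgn(\omega)\tilde{f}(\omega) = -i\,\mathcal{F}[\hat{f}](\omega)$, and inverting yields $\mathcal{F}^{-1}\bigl[\sgn(\omega)\tilde{f}(\omega)\bigr] = -i\hat{f}(t)$, which combined with the first display completes the proof. The only real obstacle is giving the convolution theorem a rigorous meaning when one of the two factors is a principal-value distribution rather than an honest $L^1$ function; since the paper has already announced that it will gloss over questions of functional domain, restricting attention to $f$ in the Schwartz class (where both convolution and Fourier transforms are unambiguous, and the density of Schwartz functions handles the desired function spaces) is sufficient for all the downstream applications in the main text.
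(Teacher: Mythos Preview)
Your proposal is correct and follows essentially the same approach as the paper: both use the decomposition $2\theta(\omega)=1+\sgn(\omega)$, reduce the claim to identifying $\mathcal{F}^{-1}[\sgn(\omega)\tilde f(\omega)]$ with $-i\hat f(t)$ via the convolution theorem, and invoke the distributional Fourier pair between $\sgn(\omega)$ and $\mathrm{p.v.}\,\frac{1}{\pi t}$. The only difference is that you supply a derivation of that Fourier pair via Sokhotski--Plemelj and contour integration, whereas the paper simply quotes $\mathcal{F}^{-1}[\sgn(\omega)]=-\tfrac{i}{\pi t}$ with a citation.
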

\begin{proof}
\begin{align*}
\mathcal{F}^{-1}\big[ 2 \theta(\omega) \tilde{f}(\omega) \big] &= \mathcal{F}^{-1} \big[ \left( 1 + \sgn(\omega) \right) \tilde{f}(\omega) \big] \\
&= f(t) + \mathcal{F}^{-1} \big[ \sgn(\omega) \tilde{f}(\omega) \big].
\end{align*}
By the convolution theorem, the final inverse Fourier transform
\begin{align*}
\mathcal{F}^{-1} \big[ \sgn(\omega) \tilde{f}(\omega) \big] &= \mathcal{F}^{-1} \left[ \sgn(\omega) \right] \ast f(t) \\
&= -\frac{i}{\pi t} \ast f(t) \\
&= -i\, H[f](t),
\end{align*}
where $\ast$ denotes convolution with respect to time.\footnote{Strictly speaking, the inverse Fourier transform of $\sgn(\omega)$ is actually the (tempered) \emph{distribution} $\mathrm{p.v.} \frac{-i}{\pi t}$ that maps a compactly supported smooth test function $f: \mathbb{R} \to \mathbb{C}$ to $\mathrm{p.v.} \int_{-\infty}^\infty \frac{-i f(t)}{\pi t}\, dt$ \cite{King}.}
\end{proof}

\begin{corollary}
If $f:\mathbb{R} \to \mathbb{C}$ is complex-valued, then
\begin{align*}
\Re \left[ \frac{1}{\pi} \int_0^\infty \tilde{f}(\omega)\, e^{-i \omega t}\, d\omega \right] &= \Re[f(t)] + \Im \big[ \hat{f}(t) \big] \\
&= \Re[f(t)] + H[\Im[f(t)]].
\end{align*}
\end{corollary}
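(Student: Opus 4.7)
The corollary should fall out almost immediately from the theorem, so the plan is essentially one of bookkeeping rather than genuine analysis.

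First I would rewrite the one-sided integral as a two-sided inverse Fourier transform by inserting the Heaviside step function:
\[
\frac{1}{\pi} \int_0^\infty \tilde{f}(\omega)\, e^{-i \omega t}\, d\omega = \frac{1}{2\pi} \int_{-\infty}^\infty 2 \theta(\omega)\, \tilde{f}(\omega)\, e^{-i \omega t}\, d\omega = \mathcal{F}^{-1}\bigl[ 2 \theta(\omega) \tilde{f}(\omega) \bigr](t).
\]
Then I would invoke the theorem~\eqref{AnalyticTheorem} to replace the right-hand side by $f(t) - i \hat{f}(t)$.

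Next I would take the real part. Writing $f = u + i v$ with $u = \Re[f]$ and $v = \Im[f]$ real-valued, and similarly $\hat{f} = \hat{u} + i \hat{v}$ (valid because the Hilbert transform~\eqref{Hilbert} has a real kernel, hence is an $\mathbb{R}$-linear operator that commutes with taking real and imaginary parts), I get
\[
\Re\bigl[ f(t) - i \hat{f}(t) \bigr] = \Re[f(t)] - \Re\bigl[ i \hat{f}(t) \bigr] = \Re[f(t)] + \Im\bigl[ \hat{f}(t) \bigr],
\]
which is the first equality in the corollary. For the second equality, the same $\mathbb{R}$-linearity of $H$ gives $\Im[\hat{f}(t)] = \Im[\hat{u}(t) + i \hat{v}(t)] = \hat{v}(t) = H[\Im[f(t)]]$, completing the chain.

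The main obstacle is not really an obstacle: one just has to be careful that the Hilbert transform really does commute with $\Re$ and $\Im$, which follows from the realness of the convolution kernel $1/(\pi t)$ in definition~\eqref{Hilbert}. Everything else is algebraic manipulation of the theorem that has already been proved.
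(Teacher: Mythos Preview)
Your proof is correct and follows essentially the same route as the paper: rewrite the one-sided integral as $\mathcal{F}^{-1}[2\theta(\omega)\tilde{f}(\omega)]$, apply the theorem to obtain $f(t)-i\hat{f}(t)$, take the real part, and use the realness of the Hilbert-transform kernel to commute $H$ with $\Re$ and $\Im$. Your write-up is slightly more explicit in splitting $f=u+iv$, but the argument is identical.
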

\begin{proof}
\beq \label{1sidedproof}
\frac{1}{\pi} \int_0^\infty \tilde{f}(\omega)\, e^{-i \omega t}\, d\omega = \mathcal{F}^{-1}\big[ 2 \theta(k) \tilde{f}(\omega) \big] = f(t) - i \hat{f}(t).
\eeq
The Hilbert transform is an integral transform, so it is linear and we can apply it separately to the real and imaginary parts of $f(t)$. Its integral kernel is real-valued, so the Hilbert transform of the real part of $f(t)$ is real, and the Hilbert transform of the imaginary part of $f(t)$ is imaginary.
\end{proof}

\begin{corollary}
If $A \in \mathbb{C}$ and $t_0 \in \mathbb{R}$, then 
\beq \label{OnesidedExpIFTapp}
\Re \left[ \frac{1}{\pi} \int_0^\infty A\, e^{-i \omega (t - t_0)}\, d\omega \right] = \Re[A]\, \delta(t - t_0) + \frac{\Im[A]}{\pi (t - t_0)}.
\eeq
\end{corollary}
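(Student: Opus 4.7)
The plan is to recognize the integrand as the forward Fourier transform of a Dirac distribution and then quote the preceding corollary. Specifically, observe that if we set $\tilde{f}(\omega) := A\, e^{i \omega t_0}$, then $\tilde{f} = \mathcal{F}[f]$ for the tempered distribution $f(t) = A\, \delta(t - t_0)$, since $\int_{-\infty}^\infty A\, \delta(t - t_0)\, e^{i \omega t}\, dt = A\, e^{i \omega t_0}$.

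Next I would apply the previous corollary directly to this $\tilde{f}$, which gives
\[
\Re \left[ \frac{1}{\pi} \int_0^\infty A\, e^{i \omega t_0}\, e^{-i \omega t}\, d\omega \right] = \Re[f(t)] + H\big[\Im[f(t)]\big](t).
\]
The real part is immediate: $\Re[f(t)] = \Re[A]\, \delta(t - t_0)$. For the Hilbert-transform piece, linearity of $H$ peels off the constant, and the definition~\eqref{Hilbert} of the Hilbert transform applied to the shifted Dirac delta gives
\[
H[\delta(\cdot - t_0)](t) = \frac{1}{\pi}\, \mathrm{p.v.} \int_{-\infty}^\infty \frac{\delta(\tau - t_0)}{t - \tau}\, d\tau = \frac{1}{\pi (t - t_0)},
\]
so that $H[\Im[f(t)]] = \Im[A]/[\pi (t - t_0)]$. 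Adding the two contributions yields the claimed identity~\eqref{OnesidedExpIFTapp}.

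The only real subtlety, and the step most worth flagging, is that every object here must be interpreted as a tempered distribution rather than a pointwise function: the one-sided integral on the left does not converge in the Lebesgue sense, $\delta(t - t_0)$ is not a function, and $1/(t - t_0)$ must be read as the principal-value distribution. However, the preceding theorem and corollary were already established at the distributional level (as the footnote on $\mathcal{F}^{-1}[\sgn(\omega)] = \mathrm{p.v.}(-i/\pi t)$ makes explicit), so no new analytic difficulty arises beyond carefully pairing both sides against Schwartz test functions to verify that the identity holds in $\mathcal{S}'(\mathbb{R})$.
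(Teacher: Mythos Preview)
Your proof is correct and follows essentially the same route as the paper: both identify $f(t) = A\,\delta(t - t_0)$ as the inverse Fourier transform of $A\,e^{i\omega t_0}$, compute its Hilbert transform $\hat{f}(t) = A/[\pi(t - t_0)]$, and read off the real part. The only cosmetic difference is that you invoke the preceding corollary (which has already taken the real part and split $f$ into $\Re f$ and $\Im f$), whereas the paper appeals one step earlier to equation~\eqref{1sidedproof} to obtain the full complex expression $A\,\delta(t-t_0) - iA/[\pi(t-t_0)]$ and then decomposes $A$ and takes the real part by hand.
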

\begin{proof}
Let $f(t) = A\, \delta(t - t_0)$ and $\tilde{f}(\omega) = A\, e^{i \omega t_0}$. Then
\[
\hat{f}(t) = \frac{A}{\pi (t - t_0)}
\]
from definition~\eqref{Hilbert}, and
\[
\frac{1}{\pi} \int_0^\infty A\, e^{-i \omega (t - t_0)}\, d\omega = A\, \delta(t - t_0) - i \frac{A}{\pi(t-t_0)}.
\]
Equation~\eqref{OnesidedExpIFTapp} follows from decomposing $A$ into its real and imaginary parts and then taking the real part of the above equation.
\end{proof}
We use equation~\eqref{OnesidedExpIFTapp} in equation~\eqref{OnesidedExpIFTbody} of the main text.

\begin{corollary}
If $f:\mathbb{R} \to \mathbb{R}$ is real-valued, then
\[
f(t) \equiv \Re \left[ \frac{1}{\pi} \int_0^\infty \tilde{f}(\omega)\, e^{-i \omega t}\, d\omega \right].
\]
\end{corollary}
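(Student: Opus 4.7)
The plan is to recognize this as an immediate specialization of the preceding corollary. That corollary establishes, for any complex-valued $f: \mathbb{R} \to \mathbb{C}$, the identity
\[
\Re \left[ \frac{1}{\pi} \int_0^\infty \tilde{f}(\omega)\, e^{-i \omega t}\, d\omega \right] = \Re[f(t)] + H[\Im[f(t)]],
\]
which already encapsulates all of the analytic content (the convolution theorem, the distributional Fourier transform of $\sgn(\omega)$, and the splitting $2\theta(\omega) = 1 + \sgn(\omega)$). The task therefore reduces to showing that the second term on the right-hand side vanishes when $f$ is real-valued.

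To that end, I would first note the trivial identities $\Re[f(t)] = f(t)$ and $\Im[f(t)] \equiv 0$ that hold for any real-valued $f$. Second, since the Hilbert transform defined in equation~\eqref{Hilbert} is an integral transform with a real-valued kernel, it is linear and annihilates the zero function: $H[0](t) \equiv 0$. Combining these two observations, the right-hand side of the preceding corollary collapses to $f(t) + 0 = f(t)$, which is precisely the claim.

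I do not anticipate any genuine obstacle here; the result is essentially algebraic once the previous corollary is in hand, and no additional analysis or delicate interchange of limits is required. The only subtle point worth remarking on is that the statement should be read in the sense of tempered distributions (or at least in $L^p$ with $p \in (1, \infty)$, as the footnote to definition~\eqref{Hilbert} cautions), so that both $\tilde{f}(\omega)$ and the Hilbert transform are well-defined in the first place; within this proviso the reasoning is immediate.

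Conceptually, this corollary is the key justification for the modified Fourier decomposition~\eqref{FourierK} used in the main text to represent the real-valued surface current $K(t)$. The underlying mechanism is that a real-valued function satisfies $\tilde{f}(-\omega) = \overline{\tilde{f}(\omega)}$, so the negative-frequency content is redundant; multiplying by $2\theta(\omega)$ in the frequency domain and taking the real part in the time domain perfectly reconstructs $f(t)$, with the factor of $2$ in the measure ($\tfrac{1}{\pi}$ rather than $\tfrac{1}{2\pi}$) absorbing this redundancy.
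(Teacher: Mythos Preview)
Your proposal is correct and matches the paper's intent exactly: the paper states this corollary without proof, treating it as an immediate specialization of the preceding corollary with $\Im[f(t)]\equiv 0$. Your write-up simply makes explicit the one-line observation the paper leaves to the reader.
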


Letting $f(t)$ be the surface current $K(t)$ discussed in the main text proves eq.~\eqref{FourierK}.

The proof of the main theorem is rather formal, but there is a way to understand it more intuitively. Because we take the real part in eq.~\eqref{Komega}, the sign of $\omega$ is ambiguous; the complex constants $\tilde{K}_\omega$ and $\tilde{K}^*_{-\omega}$ correspond to the same physical (i.e. real-valued) current $K(t)$. This corresponds to the fact that the Fourier transform of a real function $f(t)$ is Hermitian (i.e. $\tilde{f}(-\omega) \equiv \tilde{f}^*(\omega)$), while the Fourier transform of a pure imaginary function $g(t)$ is anti-Hermitian (i.e. $\tilde{g}(-\omega) \equiv -\tilde{g}^*(\omega)$). Therefore, conversely only the Hermitian part of $\tilde{K}(\omega)$ will affect the physical (i.e. real) part of $K(t)$. Loosely speaking, we have a sort of ``gauge freedom'' to add any imaginary part we want to $K(t)$ (or equivalently, to add any anti-Hermitian part to $\tilde{K}(\omega)$) without changing any physical observables.

We can think of this ``gauge freedom'' as allowing us to, for each pair of opposite-sign frequencies $\pm \omega_0$, independently redistribute Fourier weights between $\tilde{K}(\omega_0)$ and $\tilde{K}^*(-\omega_0)$, since only their sum is physically meaningful. The most natural ``gauge fixing'', corresponding to the usual Fourier transform of a pure real function $K(t)$, is to distribute the amplitude (and phase) of each real mode with frequency $\omega_0$ symmetrically between $\tilde{K}(\omega_0)$ and $\tilde{K}^*(-\omega_0)$. This choice leads to a Hermitian function $\tilde{K}(\omega)$. Adding a pure imaginary component to $K(t)$ before taking the Fourier transform redistributes the Fourier weights non-symmetrically between the pairs $\tilde{K}(\omega_0)$ and $\tilde{K}^*(-\omega_0)$ by adding an anti-Hermitian part to $\tilde{K}(\omega)$.

The Hermitian ``gauge choice'' is \emph{usually} the most convenient, but in some situations other choices are more convenient. The most common alternative choice is to add a pure imaginary component to the original real signal that corresponds to placing all of the Fourier weights on the nonnegative frequencies only. A complex-valued function $f(t)$ whose Fourier transform is nonzero only for $\omega \geq 0$ (i.e. $\tilde{f}(\omega) \equiv 0$ if $\omega < 0$) is referred to as an \emph{analytic signal} \cite{Smith}.

If $f:\mathbb{R} \to \mathbb{R}$ is a \emph{real}-valued function, then its \emph{analytic representation} $f_a:\mathbb{R} \to \mathbb{C}$ is defined to be the complex-valued function $f_a(t) := f(t) - i \hat{f}(t)$. By theorem~\ref{AnalyticTheorem}, $f_a(t) = \mathcal{F}^{-1} \big[ 2 \theta(\omega) \tilde{f}(\omega) \big]$. Every analytic representation of a real-valued function is a (complex-valued) analytic signal, and vice versa:

\begin{corollary}
$f_a:\mathbb{R} \to \mathbb{C}$ is an analytic signal iff $f_a(t) = f(t) - i \hat{f}(t)$ for some $f:\mathbb{R} \to \mathbb{C}$.
\end{corollary}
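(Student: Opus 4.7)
The plan is to prove the biconditional by handling each direction separately. The backward direction $(\Leftarrow)$ is immediate from Theorem~\ref{AnalyticTheorem}: if $f_a(t) = f(t) - i\hat{f}(t)$ for some real-valued $f$, then applying the theorem gives $\tilde{f_a}(\omega) = 2\theta(\omega)\tilde{f}(\omega)$, which is supported on $[0,\infty)$, so $f_a$ is an analytic signal by definition.

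For the forward direction $(\Rightarrow)$, I would propose the natural candidate $f := \Re[f_a]$ and verify the required identity, which reduces to showing $\Im[f_a] = -\hat{f}$. Write $f_a = u + iv$ with $u, v : \mathbb{R} \to \mathbb{R}$. The proof then uses three ingredients: (i) Hermiticity of $\tilde{u}$ and $\tilde{v}$ (since $u$ and $v$ are real), giving $\tilde{u}(-\omega) = \overline{\tilde{u}(\omega)}$ and likewise for $\tilde{v}$; (ii) the analyticity hypothesis $\tilde{f_a}(\omega) \equiv 0$ for $\omega<0$, which immediately reads $\tilde{v}(\omega) = i\tilde{u}(\omega)$ on the negative half-line; and (iii) the Hilbert transform's Fourier multiplier $\mathcal{F}[\hat{u}](\omega) = i\,\sgn(\omega)\,\tilde{u}(\omega)$, which is essentially the content of the calculation inside the proof of Theorem~\ref{AnalyticTheorem}. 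Complex-conjugating (ii) and invoking (i) transports the relation to positive frequencies in the sign-flipped form $\tilde{v}(\omega) = -i\tilde{u}(\omega)$; combining the two half-lines yields $\tilde{v}(\omega) = -i\,\sgn(\omega)\,\tilde{u}(\omega)$ for all $\omega \neq 0$. By (iii), this is exactly $\tilde{v} = -\mathcal{F}[\hat{u}]$, so $v = -\hat{u}$, and therefore $f_a = u - i\hat{u} = f - i\hat{f}$, as claimed.

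The main obstacle is just the sign bookkeeping in extending the identity $\tilde{v}(\omega) = i\tilde{u}(\omega)$ from the negative to the positive half-line via Hermitian symmetry -- the conjugation flips the sign of $i$, which is what produces the $\sgn(\omega)$ factor and allows the comparison with the Hilbert multiplier. Beyond this one-line check, the argument uses only Theorem~\ref{AnalyticTheorem}, the standard Hermiticity of Fourier transforms of real-valued functions, and the Hilbert transform's frequency-domain action, so no new analytic machinery is required. One caveat worth noting: the codomain $\mathbb{C}$ for $f$ in the corollary statement appears to be a typo for $\mathbb{R}$, since the definition of analytic representation just above was only given for real-valued $f$; I carry out the proof under this reading, which is what the surrounding text motivates.
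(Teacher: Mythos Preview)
Your proof is correct. The backward direction is identical to the paper's. For the forward direction you and the paper both take $f=\Re[f_a]$ and reduce to showing $\Im[f_a]=-H[\Re[f_a]]$, but you reach this by working directly in the frequency domain (Hermiticity of $\tilde u,\tilde v$ plus the Hilbert multiplier $i\,\sgn\omega$), whereas the paper gets there in one stroke by applying the theorem to $f_a$ itself: from $\tilde f_a(\omega)=\theta(\omega)\tilde f_a(\omega)$ and the theorem (without the factor of $2$) one obtains $f_a=-i\,\hat f_a$, and equating imaginary parts immediately gives $\Im[f_a]=-H[\Re[f_a]]$. The paper's route is a bit slicker since it recycles the theorem rather than reproving the multiplier identity; yours is more self-contained.

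One small correction: the codomain $\mathbb{C}$ for $f$ is \emph{not} a typo. The main theorem is stated for complex-valued $f$, so the backward direction goes through verbatim for any $f:\mathbb{R}\to\mathbb{C}$ (your own argument does too, despite your disclaimer). The forward direction then shows that one can in fact choose $f$ to be real, which is a slightly stronger conclusion than the bare existential with $\mathbb{C}$, but both readings are correct.
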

\begin{proof}
For the forward direction: if $f_a$ is an analytic signal then $\tilde{f}_a(\omega) \equiv \theta(\omega) \tilde{f}_a(\omega)$. Taking the inverse Fourier transform of both sides and applying theorem~\eqref{AnalyticTheorem} (but without the factor of $2$), we find that
\[
f_a(t) \equiv -i\, \hat{f}_a(t).
\]
If we write $f_a(t) = \Re[f_a(t)] + i \Im[f_a(t)]$ and equate the imaginary parts of the equation above, then we get that $\Im[f_a(t)] = -\Re \big[ \hat{f}_a(t) \big] = -H[\Re[f_a(t)]]$ because the Hilbert transform's integral kernel is real. Let $f(t) = \Re[f_a(t)]$.

For the reverse direction: Theorem~\eqref{AnalyticTheorem} gives that $f_a(t) = \mathcal{F}^{-1}\big[2 \theta(\omega) \tilde{f}(\omega) \big]$. Taking the forward Fourier transform of both sides gives that $\tilde{f}_a(\omega) = 2 \theta(\omega) \tilde{f}(\omega)$. Therefore, $f_a(t)$ is an analytic signal.
\end{proof}

This corollary is essentially the Fourier dual of the Kramers-Kronig relations. The Kramers-Kronig relations describe the \emph{forward} Fourier transform of a causal response function that vanishes for negative \emph{time displacement}, while this corollary describes the \emph{inverse} Fourier transform of a function that vanishes for negative \emph{frequency}. For square-integrable functions $f_a$, the corollary can be extended to the statement that the analytic continuation of $f_a$ to the complex plane is analytic on the closed upper half-plane and approaches zero on the upper half-plane as $|z| \to \infty$; this is another version of Titchmarsh's theorem \cite{Titchmarsh}.

We can now understand what is going on behind the scenes in eq.~\eqref{FourierK} in the main text. We can think of the ``one-sided inverse Fourier transform'' 
\[
\frac{1}{\pi} \int \limits_0^\infty \tilde{K}(\omega)\, e^{-i \omega t}\, d\omega
\]
inside the brackets in eq.~\eqref{FourierK} as $\mathcal{F}^{-1} \big[ 2 \theta(\omega) \tilde{f}(\omega) \big]$. By theorem~\eqref{AnalyticTheorem}, this is not the real-valued physical function $K(t)$, but instead the analytic representation $K_a(t) = K(t) - i \hat{K}(t)$ with only positive-frequency components. Taking the real part drops the imaginary Hilbert transform term, restores the negative-frequency components, and leaves the physical function $K(t)$.

We can prove (at a physicist's level of rigor) a ``one-sided'' variant of the convolution theorem:
\begin{corollary}
If $\big( f(t), \tilde{f}(\omega) \big)$ and $\left( g(t), \tilde{g}(\omega) \right)$ are each Fourier pairs of functions $\mathbb{R} \to \mathbb{C}$, then
\begin{align*}
\frac{1}{\pi} \int_0^\infty \tilde{f}(\omega)\, \tilde{g}(\omega)\, e^{-i \omega t}\, d\omega &= \big( f(t) - i\, \hat{f}(t) \big) \ast g(t) \\
&= f(t) \ast \left( g(t) - i\, \hat{g}(t) \right). \nn
\end{align*}
\end{corollary}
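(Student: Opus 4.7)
The plan is to reduce the one-sided Fourier integral to an ordinary (two-sided) inverse Fourier transform by inserting a Heaviside factor, then combine the standard convolution theorem with the main theorem~\eqref{AnalyticTheorem} of this appendix. First I would rewrite the left-hand side as an inverse Fourier transform over all of $\mathbb{R}$,
\[
\frac{1}{\pi}\int_0^\infty \tilde{f}(\omega)\,\tilde{g}(\omega)\,e^{-i\omega t}\,d\omega \;=\; \mathcal{F}^{-1}\big[2\theta(\omega)\,\tilde{f}(\omega)\,\tilde{g}(\omega)\big](t),
\]
using the definition of $\mathcal{F}^{-1}$ together with the fact that $\theta(\omega)$ restricts the integration to $\omega\geq 0$.

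The key move is then to group the three factors \emph{asymmetrically} as $\big[2\theta(\omega)\tilde{f}(\omega)\big]\cdot\tilde{g}(\omega)$, treating the bracketed expression as a single frequency-domain function. By the standard convolution theorem,
\[
\mathcal{F}^{-1}\big[\bigl(2\theta(\omega)\tilde{f}(\omega)\bigr)\cdot\tilde{g}(\omega)\big] \;=\; \mathcal{F}^{-1}\big[2\theta(\omega)\tilde{f}(\omega)\big]\ast g(t),
\]
and applying theorem~\eqref{AnalyticTheorem} to the first factor rewrites this as $\bigl(f(t)-i\hat{f}(t)\bigr)\ast g(t)$, which is the first claimed equality. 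The second equality then follows from the symmetric grouping $\tilde{f}(\omega)\cdot\big[2\theta(\omega)\tilde{g}(\omega)\big]$, with the same argument applied to $\tilde{g}$ instead of $\tilde{f}$.

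There is no real obstacle here, since every ingredient (theorem~\eqref{AnalyticTheorem}, the ordinary convolution theorem, and the definition of the Hilbert transform) has already been established. The only subtlety worth flagging is the temptation to split the factor $2\theta(\omega)$ symmetrically between both functions via $\theta(\omega)^2=\theta(\omega)$, writing $2\theta\,\tilde{f}\,\tilde{g}=\tfrac{1}{2}\bigl[2\theta\tilde{f}\bigr]\bigl[2\theta\tilde{g}\bigr]$; this produces the equivalent but formally distinct expression $\tfrac{1}{2}\bigl(f-i\hat{f}\bigr)\ast\bigl(g-i\hat{g}\bigr)$, and must be kept distinct from the asymmetric formula asserted in the corollary. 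As elsewhere in this appendix, the argument proceeds at a physicist's level of rigor: we tacitly assume that $f$, $g$, and their Hilbert transforms are sufficiently well-behaved that the convolution theorem and the relevant interchanges of integration order can be applied without further comment.
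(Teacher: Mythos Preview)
Your proof is correct and follows exactly the same route as the paper's own argument: rewrite the one-sided integral as $\mathcal{F}^{-1}\big[2\theta(\omega)\,\tilde{f}(\omega)\,\tilde{g}(\omega)\big]$, group the Heaviside factor with one of the two functions, apply the ordinary convolution theorem, and then invoke theorem~\eqref{AnalyticTheorem}. Your remark about the alternative symmetric splitting $\tfrac{1}{2}\bigl(f-i\hat{f}\bigr)\ast\bigl(g-i\hat{g}\bigr)$ anticipates precisely the generalization the paper states immediately after the corollary.
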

\begin{proof}
\begin{align}
\frac{1}{\pi} \int_0^\infty \tilde{f}(\omega)\, \tilde{g}(\omega)\, e^{-i \omega t}\, d\omega &= \mathcal{F}^{-1} \big[ \tilde{f}(\omega)\, 2 \theta(\omega)\, \tilde{g}(\omega) \big] \nn \\
&= \mathcal{F}^{-1} \big[ \tilde{f}(\omega)\, 2 \theta(\omega) \big] \ast g(t) \label{convproof} \\
&= f(t) \ast \mathcal{F}^{-1} \big[ 2 \theta(\omega)\, \tilde{g}(\omega) \big]. \nn
\end{align}
\end{proof}

More generally, we have that
\beq \label{general}
\frac{1}{\pi} \int_0^\infty \prod_{k=1}^n \tilde{f}_k(\omega)\, e^{-i \omega t}\, d\omega
\eeq
equals $1/2^{m-1}$ times the $n$-fold convolution of the $\{f_k(t)\}$, $k = 1, ... n$, but with any $m$ of the convolved functions $f_k(t)$ replaced by $f_k(t) - i\, \hat{f}_k(t)$, where $m$ is any natural number $m = 1, \dots, n$. This is because $2 \theta(\omega) = 2 \theta(\omega)^m = \frac{1}{2^{m-1}} (2 \theta(\omega))^m$, and we can multiply each factor $\tilde{f}_k(\omega)$ in the integrand of the integral~\eqref{general} by a different factor $2 \theta(\omega)$.

Finally, we have
\begin{corollary} \label{1sidedconv}
If $\left( g(t), \tilde{g}(\omega) \right)$ is a Fourier pair and $g:\mathbb{R} \to \mathbb{R}$ is real-valued, then
\begin{multline} \label{reonesidedconv}
\mathrm{Re} \left[ \frac{1}{\pi} \int_0^\infty \tilde{f}(\omega)\, \tilde{g}(\omega)\, e^{-i \omega t}\, d\omega \right] \\= \mathrm{Re} \left[ \frac{1}{\pi} \int_0^\infty \tilde{f}(\omega)\, e^{-i \omega t}\, d\omega \right] \ast g(t).
\end{multline}
\end{corollary}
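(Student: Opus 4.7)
The plan is to combine the preceding one-sided convolution theorem (the corollary immediately before this one, which expresses $\tfrac{1}{\pi}\int_0^\infty \tilde{f}\tilde{g}\, e^{-i\omega t}d\omega$ as a convolution involving analytic representations) with the first corollary of this appendix (which identifies $\mathrm{Re}\bigl[\tfrac{1}{\pi}\int_0^\infty \tilde{f}(\omega)e^{-i\omega t}d\omega\bigr]$ with $\mathrm{Re}[f(t)] + H[\Im[f(t)]]$). The identity we need to bridge the two is the ``swap'' property of the Hilbert transform under convolution, $f\ast \hat{g} = \hat{f}\ast g$, which follows immediately from the fact that $\hat{g} = k\ast g$ where $k = \mathrm{p.v.}\tfrac{1}{\pi t}$ is a real-valued (distributional) kernel, together with the associativity and commutativity of convolution.

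First, I would apply the previously-established one-sided convolution theorem in the form $\tfrac{1}{\pi}\int_0^\infty \tilde{f}(\omega)\tilde{g}(\omega)e^{-i\omega t}d\omega = f(t)\ast\bigl(g(t) - i\hat{g}(t)\bigr)$, which is legitimate because $g$ is real so $g - i\hat{g}$ is its analytic representation. Next, I would take the real part of both sides. Writing $f = \Re[f] + i\,\Im[f]$ and using that $g$ and $\hat{g}$ are real (the Hilbert kernel is real), the convolution $f\ast(g - i\hat{g})$ expands into four real-or-imaginary pieces, and the real part collapses to
\begin{equation*}
\mathrm{Re}\Bigl[\tfrac{1}{\pi}\!\int_0^\infty \tilde{f}(\omega)\tilde{g}(\omega)e^{-i\omega t}d\omega\Bigr] = \Re[f]\ast g + \Im[f]\ast \hat{g}.
\end{equation*}

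Then I would use the swap identity $\Im[f]\ast \hat{g} = H[\Im[f]]\ast g$ to rewrite the second term, and then factor out the common convolution with $g$, obtaining
\begin{equation*}
\bigl(\Re[f] + H[\Im[f]]\bigr)\ast g(t).
\end{equation*}
Finally, by the first corollary of this appendix, the parenthesized expression is exactly $\mathrm{Re}\bigl[\tfrac{1}{\pi}\int_0^\infty \tilde{f}(\omega)e^{-i\omega t}d\omega\bigr]$, which completes the identification with the right-hand side of \eqref{reonesidedconv}.

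I do not expect any serious obstacle: every step is either algebraic bookkeeping (separating real/imaginary parts) or an invocation of a result already established in this appendix. The only subtlety worth flagging is the implicit assumption that $f$, $g$, $\tilde{f}$, $\tilde{g}$ are sufficiently well-behaved for the Hilbert transform, the Fourier inversion, and Fubini-style rearrangements to be valid -- but this is precisely the ``physicist's level of rigor'' caveat already made at the start of the appendix, so no additional hypotheses are needed beyond those already assumed for the earlier corollaries.
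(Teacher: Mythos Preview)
Your argument is correct, but it takes a longer detour than the paper's. The previous corollary already gives the one-sided convolution in \emph{two} equivalent forms, $(f-i\hat f)\ast g$ and $f\ast(g-i\hat g)$. You chose the second, which forces you to separate $f$ into real and imaginary parts, invoke the swap identity $\Im[f]\ast\hat g = H[\Im[f]]\ast g$, and then reassemble via Corollary~1. The paper instead uses the first form: by equation~\eqref{1sidedproof}, $f-i\hat f$ is literally $\tfrac{1}{\pi}\int_0^\infty \tilde f(\omega)e^{-i\omega t}\,d\omega$, so equation~\eqref{convproof} already reads
\[
\frac{1}{\pi}\int_0^\infty \tilde f(\omega)\tilde g(\omega)e^{-i\omega t}\,d\omega \;=\; \left[\frac{1}{\pi}\int_0^\infty \tilde f(\omega)e^{-i\omega t}\,d\omega\right]\ast g(t),
\]
and since $g$ is real, taking the real part commutes with convolution by $g$. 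That's the whole proof. Your route has the mild advantage of making the role of the Hilbert transform explicit at every stage; the paper's route avoids the swap identity entirely by grouping the step function with $\tilde f$ rather than $\tilde g$ from the outset.
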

\noindent The proof follows from equations~\eqref{convproof} and \eqref{1sidedproof} and from the fact that $g(t)$ is real.

Letting $g(t)$ be the surface current $K(t)$ discussed in the main text and using equation~\eqref{Einrt} proves eqs.~\eqref{convs}.

\section{Asymptotics of the frequency-domain Green's functions for the electromagnetic fields} \label{Asymptotics}

The Bessel functions of the first kind $J_0(x)$ and $J_1(x)$ and the Hankel functions of the first kind $H^{(1)}_0(x)$ and $H^{(1)}_1(x)$ have the asymptotic forms
\begin{align*}
J_0(x) &\sim 1, \\
J_1(x) &\sim \frac{1}{2} x \\
H^{(1)}_0(x) &\sim \frac{2 i}{\pi} \ln x \\
H^{(1)}_1(x) &\sim - \frac{2 i}{\pi x}
\end{align*}
as $x \to 0^+$ and
\begin{subequations} \label{Jas}
\begin{align}
J_0(x) &\sim \sqrt{\frac{2}{\pi x}} \cos \left( x - \frac{1}{4} \pi \right) \\
J_1(x) &\sim \sqrt{\frac{2}{\pi x}} \cos \left( x - \frac{3}{4} \pi \right) \end{align}
\end{subequations}
\vspace{-\baselineskip}
\begin{subequations} \label{Has}
\begin{align}
H^{(1)}_0(x) &\sim \sqrt{\frac{2}{\pi x}} e^{i \left( x - \frac{1}{4} \pi \right)} \hspace{44pt} \\
H^{(1)}_1(x) &\sim \sqrt{\frac{2}{\pi x}} e^{i \left( x - \frac{3}{4} \pi \right)}
\end{align}
\end{subequations}
as $x \to +\infty$ \cite{Jackson}.\footnote{Strictly speaking, the asymptotic relations~\eqref{Jas} for the Bessel functions $J_n(x)$ are not quite correct in the standard sense, because the zeros of the Bessel functions come arbitrary close to -- but never exactly equal to -- the zeros of the cosine approximation at $\left( m + \frac{1}{2} \left(n - \frac{1}{2} \right) \right) \pi$, $m \in \mathbb{Z}$. Those relations should be interpreted as saying that the ratio of the two sides comes arbitrarily close to 1 as $x \to +\infty$, except on an infinite sequence of real intervals where the ratio diverges -- but those intervals (where the ratio deviates from $1$ by more than $\epsilon$ for any fixed $\epsilon$) become arbitrarily short as $x \to +\infty$. The asymptotic relations~\eqref{Has} for the Hankel functions $H^{(1)}_n(x)$ are correct in the standard sense.}

Therefore, the frequency-domain Green's functions~\eqref{gtilde} have the asymptotic forms
\begin{align}
\tilde{g}_{E_\text{in}}(r, \omega) &\sim \frac{i r}{2} \omega \label{lowfreqgs} \\
\tilde{g}_{E_\text{out}}(r, \omega) &\sim \frac{i R^2}{2r} \omega \nn \\
\tilde{g}_{B_\text{in}}(r, \omega) &\sim 1 \nn \\
\tilde{g}_{B_\text{out}}(r, \omega) &\sim -\frac{1}{2} R^2 \omega^2 \ln(\omega r) \nn
\end{align}
as $\omega \to 0^+$ and
\begin{align}
\tilde{g}_{E_\text{in}}(r, \omega) &\sim \sqrt{\frac{R}{r}} \cos \left( \omega r - \frac{3}{4} \pi \right) e^{i \left( \omega R + \frac{1}{4} \pi \right)} \label{highfreqgs} \\
\tilde{g}_{B_\text{in}}(r, \omega) &\sim \begin{cases}
\sqrt{\frac{R}{r}} \cos \left( \omega r - \frac{1}{4} \pi \right) e^{i \left( \omega R - \frac{1}{4} \pi \right)} \qquad &\text{if } r > 0 \\
\sqrt{\frac{\pi \omega R}{2}} e^{i \left( \omega R - \frac{1}{4} \pi \right)} \qquad &\text{if } r = 0
\end{cases} \nn \\
\tilde{g}_{E_\text{out}}(r, \omega) &\sim \tilde{g}_{B_\text{out}}(r, \omega) \sim \sqrt{\frac{R}{r}} \cos \left( \omega R - \frac{3}{4} \pi \right) e^{i \left( \omega r + \frac{1}{4} \pi \right)} \nn
\end{align}
as $\omega \to +\infty$.\footnote{The high-frequency asymptotic relations~\eqref{highfreqgs} as $\omega \to +\infty$ hold in the same sense as the asymptotic relations~\eqref{Jas} discussed in the previous footnote. Recall that $\tilde{g}_{E_\text{in}}(r = 0, \omega) \equiv 0$.} By expanding out the cosines as sums of complex exponentials, we can rewrite the high-frequency asymptotics~\eqref{highfreqgs} as equation~\eqref{ghighfreq} in the main text. This equation shows that $\tilde{g}_{E_\text{in}}(r, \omega) \sim \tilde{g}_{E_\text{out}}(r, \omega) \sim \tilde{g}_{B_\text{out}}(r, \omega)$ as $\omega \to +\infty$, which is not immediately obvious from equations~\eqref{highfreqgs}.

\bibliography{Solenoid}
\bibliographystyle{apsrev4-2}

\end{document}